\newtheorem{theorem}{Theorem}[section]
\newtheorem{lemma}[theorem]{Lemma}
\newtheorem{conjecture}[theorem]{Conjecture}
\theoremstyle{definition}
\newtheorem{defn}[theorem]{Definition}
\newcommand\ignore[1]{}
\newcommand\cat[1]{{\ensuremath{\bf #1}}}
\newcommand\Z{\ensuremath{\mathbb{Z}}}
\newcommand\N{\ensuremath{\mathbb{N}}}
\newcommand\op{\ensuremath{\mathrm{op}}}
\newcommand\Stab{\ensuremath{\mathrm{St}}}
\newcommand\Orbit{\ensuremath{\mathrm{Orbit}}}
\newcommand\Rep{\ensuremath{\mathrm{Rep}}}
\newcommand\Lin{\ensuremath{\mathrm{Lin}}}
\newcommand\Mor{\ensuremath{\mathrm{Mor}}}
\newcommand\Ob{\ensuremath{\mathrm{Ob}}}
\newcommand\Hom{\ensuremath{\mathrm{Hom}}}
\newcommand\id{\ensuremath{\mathrm{id}}}
\newcommand\GA{\cat{GpdAct}{}}
\newcommand\GB{\ensuremath{R_{\cat G}}}
\newcommand\BG{\ensuremath{L _{\cat G}}}
\newcommand\xto[1]{\ensuremath{\smash{{}\stackrel{\smash{#1}}{\to}{}}}}
\newcommand\xtoo[1]{\ensuremath{\smash{{}\stackrel{\smash{#1}}{\Rightarrow}{}}}}
\newcommand{\sdag}{{\ensuremath{\scriptscriptstyle{\dag}}}}
\def\fillA{blue!40}
\def\fillB{red!40}
\def\fillD{yellow!50}
\def\fillC{blue!40}
\def\sideangle{30}
\def\nwangle{180-\sideangle}
\def\neangle{\sideangle}
\def\swangle{180+\sideangle}
\def\seangle{-\sideangle}
\def\syntaxfill{blue!20}
\tikzset{circlelabel/.style={draw, circle, inner sep=0pt, minimum width=0.4cm, fill=white}}
\def\innerboxsep{3pt}
\def\innersep{4pt}
\def\sep{5pt}
\newcommand\vc[1]{\begin{tabular}{@{}l}#1\end{tabular}}
\newcommand\separatetwocells{\\[\sep]}
\newcommand{\centerdia}[1]{\ensuremath{#1}}
\newcommand\newtwocell[3]{\centerdia{\begin{aligned}
\begin{tikzpicture}[scale=1.3]
    #1
    \draw [black!20]
        ([xshift=-\innerboxsep, yshift=-\innerboxsep] current bounding box.south west)
        rectangle
        ([xshift=\innerboxsep, yshift=\innerboxsep] current bounding box.north east);
\end{tikzpicture}
\end{aligned}}
& \hspace{15pt} \makebox[120pt][l]{\bf\vc{#2 \\[\innersep] #3}}}
\begin{document}

\allowdisplaybreaks

\title{Groupoid Semantics for Thermal Computing}

\ifbool{arxiv}
{\author{\begin{tabular}{c@{\hspace{1cm}}c}Krzysztof Bar\footnote{Department of Computer Science, University of Oxford}
&
Jamie Vicary\footnote{Centre for Quantum Technologies, National University of Singapore} ${}^* $
\\
krzysztof.bar@cs.ox.ac.uk & jamie.vicary@cs.ox.ac.uk
\end{tabular}}}
{\author{\IEEEauthorblockN{Krzysztof Bar}
\IEEEauthorblockA{Department of Computer Science,
University of Oxford
\\
\texttt{krzysztof.bar@cs.ox.ac.uk}}
\and
\IEEEauthorblockN{Jamie Vicary}
\IEEEauthorblockA{Centre for Quantum Technologies, University of Singapore\\
and Department of Computer Science, University of Oxford
\\
\texttt{jamie.vicary@cs.ox.ac.uk}}
}}

\date{January 14, 2014}

\maketitle

\noindent
\begin{abstract}
A groupoid semantics is presented for systems with both logical and thermal degrees of freedom. We apply this to a syntactic model for encryption, and obtain an algebraic characterization of the heat produced by the encryption function, as predicted by Landauer's principle. Our model has a linear representation theory that reveals an underlying quantum semantics, giving for the first time a functorial classical model for quantum teleportation and other quantum phenomena.
\end{abstract}

\section{Introduction}

\subsection{Overview}
\label{sec:overview}

\noindent
This paper presents a categorical semantics for the thermodynamics of computation. The model describes not only the logical states of a computational device, but also the microscopic states that provide the thermal degrees of freedom. Based on finite groupoids and their actions on sets, the model has a simple combinatorial foundation that gives a new way to model thermodynamic concepts, and is amenable to concrete calculation.

The logical state of a computational device can often be considered to be a coarse-graining of its fundamental microscopic description as a physical system. For example, when reading data from a piece of magnetic tape, all that is important from a logical perspective is the sign of the overall magnetic field created by large number of atoms; variations in the microscopic alignment of particular atoms \mbox{are simply not relevant.}
\begin{align}
\begin{aligned}
\def\h{1.2} 
\def\j{1.8} 
\def\k{1.2} 
\def\l{0.4} 
\begin{tikzpicture}[scale=0.9]
\makeatletter
\pgfdeclareradialshading[tikz@ball]{ball}{\pgfqpoint{-10bp}{10bp}}{%
 color(0bp)=(tikz@ball!30!white);
 color(9bp)=(tikz@ball!75!white);
 color(18bp)=(tikz@ball!90!black);
 color(25bp)=(tikz@ball!70!black);
 color(50bp)=(black)}
\makeatother
\pgfmathsetseed{2}
\foreach \x in {0,1,2} {
  \foreach \y in {0,1,2} {
    \foreach \z in {0,1,2} {
      \node [draw, ball color=green, shape=circle] at (\z*\k+\y*\j,\z*\l+\x*\h) {};
      \pgfmathrandominteger{\angle}{0}{20}
      \draw [ultra thick, ->, shift={(\z*\k+\y*\j,\z*\l+\x*\h)}, rotate=\angle-10] (0, -0.4) to (0,0.4);
    }
  }
}
\foreach \x in {0,1,2} {
  \foreach \y in {0,1,2} {
    \foreach \z in {0,1,2} {      
      \draw [dashed, \fillC] (-0.4*\k+\y*\j, -0.4*\l+\x*\h) to (2*\k+0.4*\k+\y*\j,2*\l+0.4*\l+\x*\h);
      \draw [dashed, \fillC] (0+\y*\j+\z*\k, -0.5+\z*\l) to (0+\y*\j+\z*\k, +2*\h+0.5+\z*\l);
      \draw [dashed, \fillC] (-0.5+\z*\k, 0+\x*\h+\z*\l) to (2*\j+0.5+\z*\k, 0+\x*\h+\z*\l);
    }
  }
}
\node at (\j+\k,-0.8) {\small \textit{Atoms in a ferromagnetic material}};
\end{tikzpicture}
\end{aligned}
\end{align}
This is for good reason: while the overall magnetic field orientation is robust, the microscopic alignment of any atom will change uncontrollably over time as it interacts with its neighbours. It would be useless to deliberately encode information into such a local state if one had any desire to recover it later. We can think of logical states as equivalence classes of microscopic states, where information about such non-robust and irrelevant aspects of the state is neglected.

The primary evidence that a thermodynamic interpretation of our model is appropriate arises from a demonstration in Section~\ref{sec:2category} that our model can replicate the basic properties of logical states and microstates that one would expect to find in a toy model of a thermodynamic system. In particular, we show that logical states are robust and amenable to logical operations, while microscopic states are vulnerable to perturbations, and propagate uncontrollably through an extended system.

Our first application is to information-theoretically secure encrypted communication. We develop in Section~\ref{sec:encryption} a syntactic definition of encryption based on complementary observables, and show that this has a representation in our new semantics, in which the encryption operation acts as a permutation on microstates. We further show that this operation has distinct logical and thermal outputs, giving a clear signature of Landauer's principle operating within our formalism.

We then show in Section~\ref{sec:quantum} that our model has a structure-preserving map to an underlying quantum semantics. Applied to our encrypted communication procedure, we show that this yields precisely quantum teleportation. This is significant, providing the first complete account of quantum teleportation in classical combinatorial terms. It  is stronger than previous `toy models' for quantum teleportation described in the literature~\cite{ce08-tqc, cp01-cae, osw05-can, s04-dev, sv13-bsnc} as the correspondence here is functorial in nature.

This reinterpretation suggests a surprisingly close relationship between the properties of information flow in classical thermodynamic systems and in pure-state quantum theory, which does not seem to have been well-recognized in the literature. In particular, the phenomenon of quantum decoherence is seen to correspond to the classical process whereby microscopic perturbations spread through a thermodynamic system. We also give a thermodynamic dense coding algorithm, in which 2 classical bits are apparently sent with perfect fidelity through a channel with a 1-bit classical capacity, and show that under our quantum reinterpretation this gives exactly the known quantum dense coding algorithm. It is exciting to be able to use logical techniques to gain insight into these aspects of quantum theory.

An interesting mathematical feature of our results is that we are able, apparently for the first time, to obtain a combinatorial explanation of linear algebraic phenomena (in this case, teleportation and dense coding)\ that in general involve complex-valued coefficients. This achieves a long-standing goal of the groupoidification programme in categorical algebra, as we discuss in Section~\ref{sec:otherwork}.

\ifbool{arxiv}{}{Throughout the paper we give as much mathematical detail as possible, although some proofs are omitted for reasons of space.}


\subsection{The groupoid model}
\label{sec:groupoidmodel}

\noindent
Logical methods in computer science are traditionally focused on properties of logical states in the sense described above. However, microstates are also of great importance to the foundations of computer science. A key insight is that classical computers can be understood as classical mechanical systems~\cite{fredkintoffoli}, and as such they must be fundamentally time-reversible, with computation proceeding at the microscopic level entirely by  permutations of microstates.

As a result, if we compute a logically irreversible function---such as the addition of two bits modulo 2---we must accept that the initial logical state is unavoidably encoded in the final microstate of the computer. An immediate consequence is the  inevitability of side-channel attacks~\cite{malacaria}, whereby knowledge of the final microstate is used to infer information about the initial logical state. Landauer's principle tells us that such logically irreversible computations necessarily lead to the production of heat~\cite{ladyman-groisman, l61-dhg}.

This article presents a new logical framework in which logical states and microstates can be treated together. Extended systems in our model are described by \textit{groupoids}, which are categories in which all the morphisms are invertible. We interpret the \textit{objects} of a groupoid as describing the logical states, and the \textit{morphisms} as describing the microstates. Here is an example of a groupoid, the union of two copies of the group~$\Z_2$:
\tikzset{circlelabel/.style={draw, circle, inner sep=0pt, minimum width=0.5cm}}
\tikzset{bullet/.style={draw, fill=white, circle, inner sep=0pt, minimum width=0.4cm}}
\def\nodepos{\ifbool{arxiv}{5cm}{3.25}}
\begin{align*}
&\begin{aligned}
\begin{tikzpicture}[yscale=0.9]
\node (S) [bullet] at (-0.25,0) {0};
\node (T) [bullet] at (\nodepos,0) {1};
\draw [->] (S.145) to [out=145, in=-145, looseness=8] node [auto, swap, inner sep=1pt] {$(0,0)$} (S.-145);
\draw [->] (S.35) to [out=35, in=-35, looseness=8] node [auto, inner sep=1pt] {$(1,1)$} (S.-35);
\draw [->] (T.145) to [out=145, in=-145, looseness=8] node [auto, swap, inner sep=1pt] {$(0,1)$} (T.-145);
\draw [->] (T.35) to [out=35, in=-35, looseness=8] node [auto, inner sep=1pt] {$(1,0)$} (T.-35);
\end{tikzpicture}
\end{aligned}
\end{align*}
According to our interpretation this represents a system with 2 logical states, each of which has 2 underlying microstates, giving a total of 4 microstates overall.

In our model, the result of a computation is a morphism in a groupoid. Suppose we perform a deterministic computation on a pair of bits, the outcome of which is the one of the 4 morphisms of the groupoid above, determined by the labelling we have provided. Then  on the level of logical states our computation is performing addition modulo 2. However, no information has been lost in principle, since if we can determine the microstate then we can read off the original states of the two bits.

In this way our model exhibits the essential basic property described above: invertible dynamics at the level of microstates can implement noninvertible computations at the level of logical states. If we assume the thermal interpretation of microstates as mentioned in that section, then the multiple microstates per logical state available in \cat H tell us that implementing this procedure will generate heat.

The algebraic structure of the groupoid plays a crucial role, governing the way that microstates scramble through an extended computational system. From a security perspective, this has the effect in our model of denying an attacker access to the microstate unless she is in possession of the entire extended system, a reasonable property to expect for a toy model of a thermodynamic system. Furthermore, we demonstrate in Section~\ref{sec:phenomenology} that many other simple properties of logical states and microstates hold true in our model, such as the robustness of logical states, and vulnerability of microstates to perturbation. This provides a mathematical foundation for the thermodynamic interpretation we are proposing.

\subsection{Relationships with other work}
\label{sec:otherwork}

\noindent
This paper extends the methods of categorical quantum mechanics~\cite{ac04-csqp, ac08-cqm, cd11-iqo, sv13-bsnc, v12-hsqp} to the thermodynamics of computation. These techniques have already been given broad application, to areas including including linguistics~\cite{csc10-cdm}, discrete quantum models~\cite{ce08-tqc} and classical cryptography~\cite{sv13-bsnc}, and this work provides yet another domain.
More must be done to understand how the wide range of results obtained in the field can be useful in the thermodynamic setting.

The motivations behind this paper, and the formal tools that we use,  have close connections to those of the combinatorial species programme of Joyal~\cite{bicategoryspecies, joyalspecies} and the groupoidification programme of Baez, Dolan and collaborators~\cite{b08-hda7, bd01-fsfd, m06-caqm, m10-ct, mv13-cha1}. That work seeks to find combinatorial models for phenomena in linear algebra, just as we do here. However, the formal basis is slightly different: while these models are based on the bicategory~\cat{Span_2(Gpd)} of groupoids, spans, and spans of spans, our work is based on the bicategory $\GA$ groupoids, profunctors, and spans. These are small technical differences, but they are essential for the results of this paper; in particular, we conjecture that there are no complementary structures in \cat{Span(Gpd)}.

One strong reason to hope for such a connection is our demonstration that a functorial groupoidal interpretation  can be given for linear algebraic phenomena which involve nontrivial complex-valued coefficients, such as quantum teleportation. This has been  a long-stated aim of the groupoidification programme~\cite[Section~1]{b08-hda7}, which has not yet been achieved unless the complex numbers are put in `by hand'~\cite{m06-caqm}.

Groupoids play a central role in homotopy type theory~\cite{hofmannstreicher, htt}, which is a type-theoretical foundation for homotopy theory, and more ambitiously for all of mathematics. Our bicategory \cat{GpdAct} can be seen as a groupoid-based logic with a non-cartesian tensor product, and we conjecture that it may act as model for a linear form of homotopy type theory, although such a linear form has not yet been given in the literature.

There is significant ongoing debate in the philosophy of computation on the status of Landauer's principle~\cite{lps08, lr13-ld, n11-wfl}. A key point of contention is the nature of controlled operations on thermodynamic systems. It is interesting that our own approach also relies heavily on the concept of controlled operation (see Section~\ref{sec:phenomenology}), and in particular is clearly supportive of the argument of Ladyman and Robertson~\cite[Section 4]{lr13-ld}, in that there is no possibility in our model for a system to perform a controlled operation `on itself'. We speculate that logical techniques could, if further developed, play an important role in clarifying these philosophical questions.

\subsection{Acknowledgements}

\noindent
We are grateful to Samson Abramsky, Andrew Ker and Pasquale Malacaria for discussions on the thermodynamics of security, and John Baez, Brendan Fong and Jeffrey Morton for conversations about groupoids and actions.

\section{Groupoids, actions and spans}
\label{sec:2category}

\subsection{Introduction}

\noindent
Our groupoid semantics is presented formally as a bicategory~\cite{borceux, ml97-cwm}. The theory of bicategories has had longstanding and significant application to logical methods in computer science, including in such mainstream areas as the $\lambda$\-calculus~\cite{seely} (presented at one of the first IEEE meetings on Logic in Computer Science), combinatorics~\cite{bicategoryspecies}, concurrency~\cite{presheafpi} and linguistics~\cite{lambek}.

A bicategory has three levels of structure: objects, morphisms and 2-morphisms. For our purposes, a useful interpretation is that the objects describe the types of extended systems, the morphisms describe the types of local systems, and the 2\-morphisms describe the computational processes that can occur. An intuitive description of our bicategory is that objects are groupoids, morphisms are sets equipped with commuting actions of groupoids, and 2\-morphisms are spans of sets compatible with the groupoid actions.

We define our  bicategory $\GA$ in some detail in Section~\ref{fulldefinition}. We go on to describe our abstract syntax in Section~\ref{sec:syntax}, and its representation in $\GA$ in Section~\ref{sec:representation}. In Section~\ref{sec:phenomenology} we show that logical states and microstates described by this bicategory behave in the way we would expect in a simple toy model of a thermodynamic system, such as the ferromagnetic material presented in the introduction.

\subsection{Definition of \GA}
\label{fulldefinition}

\noindent
Here we give a precise definition of the bicategory \GA. We assume some familiarity with the theory of bicategories. For a good introduction see~\mbox{\cite[Chapter 7]{borceux}}.

For any category \cat C, we write $\N : \cat C \to \cat {Set}$ for the constant functor that sends all morphisms to the identity on the set of natural numbers. Also, for any two functors $S,T: \cat C \to \cat {Set}$, we write $S \times T$ for their product in the functor category.

\setitemize{leftmargin=15pt}
\begin{defn}
The symmetric monoidal bicategory $\GA$ is built from the following structures:
\begin{itemize}
\item \textbf{Objects} are locally finite groupoids $\cat G$, $\cat H$, ...
\item \textbf{Morphisms} $S: \cat G \to \cat H$ are functors $S: \cat H ^\op \times \cat G \to \cat {Set}$
\item \textbf{2-Morphisms} $\sigma: S \Rightarrow T$ are natural transformations \mbox{$\sigma: S \times T \Rightarrow \N$}
\end{itemize}
\end{defn}

\noindent
The morphisms of this bicategory are also known as profunctors or distributors, and there is a standard way to compose them~\cite[Proposition 7.8.2]{borceux} which we use here. The 2\-morphisms can be considered as families of spans of sets which are compatible with the groupoid actions. We define 2\-cell composition as follows:
\begin{itemize}
\item \textbf{Vertical composition} is ordinary composition of spans
\item \textbf{Horizontal composition} of 2-morphisms

\vspace{-10pt}
$$\begin{tikzpicture}
\node (1) at (0,0) {\cat G};
\node (2) at (2.5,0) {\cat H};
\node (3) at (5,0) {\cat J};
\draw [->] (1) to [out=20, in=160] node (S) [above] {$S$} (2);
\draw [->] (1) to [out=-20, in=-160] node (T) [below] {$T$} (2);
\draw [->] (2) to [out=20, in=160] node (U) [above] {$U$} (3);
\draw [->] (2) to [out=-20, in=-160] node (V) [below] {$V$} (3);
\node [rotate=-90] (A) at (1.25,0) {$\Rightarrow$};
\node [anchor=west] at (A.center) {$\sigma$};
\node [rotate=-90] (A) at (3.75,0) {$\Rightarrow$};
\node [anchor=west] at (A.center) {$\tau$};
\end{tikzpicture}$$

\vspace{-10pt}
is the 2\-morphism $(\tau \circ \sigma) : U \circ S \Rightarrow V \circ T$ defined as follows, for all $G \in \Ob(\cat G)$, $H,H' \in \Ob(\cat H)$, $J \in \Ob(\cat J)$, $u \in U(J,H)$, $s \in S(H,G)$, $v \in V(J,H')$ and $t \in T(H',G)$:
$$(\tau \circ \sigma) ( (s,u), (t,v) ) := \sum _{H' \stackrel f \to H} \sigma (s \cdot f, t) \times \pi(u, f \cdot v)$$
\end{itemize}
There is a tensor product given on objects by the cartesian product of groupoids. This is the standard tensor product on the bicategory of categories, profunctors and natural transformations, which our bicategory extends.

\begin{defn}
For any 2-morphism $\sigma: S \Rightarrow T$, its \textit{converse} $\sigma ^\dag : T \Rightarrow S$ is constructed as the natural transformation \mbox{$T \times S \Rightarrow S \times T \Rightarrow \N$}, where the first is the symmetry of the cartesian product and the second is the definition of $\sigma$.
\end{defn}

\noindent
If $\sigma$ is interpreted as a computational process, then $\sigma ^\dag$ is interpreted as its time-reversal.

It is not possible in this{\ifbool{arxiv}{ }{ conference }}paper to give a full proof of correctness of the symmetric monoidal bicategory structure. Just writing out the definition of a symmetric monoidal bicategory runs to several pages~\cite[Section 2.2]{sp11-thesis}. The similarity of our construction to the standard symmetric monoidal bicategory of categories, profunctors and natural transformation means that it seems quite a mild assumption that our new bicategory is well-defined.

\ignore{
We do check one detail here, to give a sense of the necessary proof style.
\begin{lemma}
Horizontal composition of 2\-cells in $\GA$ is well-defined.
\end{lemma}
\begin{proof}
LET'S SEE IF WE HAVE SPACE FOR THIS.
\end{proof}
}

\subsection{Graphical syntax}
\label{sec:syntax}

\noindent
We use a graphical syntax to represent the elements of our theory. An application of the graphical notation for bicategories, its topological rewrite rules are closely related to those of topological cobordisms~\cite{sp11-thesis}. This sort of logic would be almost impossible to work with in the absence of categorical tools, which makes the graphical syntax rigorous.

The graphical perspective is the most powerful for abstract reasoning, but it is now well-recognized that there are several other perspectives which are just as valid~\cite{bs10-rosetta, s11-sgl}:
$$\def\gap{6pt}
\begin{tabular}{|@{\,\,}l@{\hspace{\gap}}l@{\hspace{\gap}}l@{\hspace{\gap}}l@{\,\,}|}
\hline
\bf Categories & Objects & Morphisms & 2-Morphisms
\\\hline
\bf Geometry & Regions & Lines & Vertices
\\\hline
\bf Logic & Types & Terms & Rewrites
\\
\hline
\bf Semantics & Groupoids & Actions & Spans
\\
\hline
\end{tabular}
$$
In particular, our graphical approach could in principle be presented in a more standard way as a sequent calculus, although such an approach would be much less convenient.

The basic building blocks of our graphical syntax follow below. Their interpretation in terms of categorical structure is standard~\citep[Section~2.2]{l06-faaa}, and follows the first two lines of the table above: regions represent objects, lines represent morphisms, and vertices represent 2\-morphisms. Horizontal and vertical composition is given by horizontal and vertical juxtaposition respectively. The tensor product of our bicategory is represented by layering one region above another.

A shaded region represents a nontrivial object, represented in our semantics by a groupoid \cat G, while an unshaded region represents the trivial system corresponding to the groupoid~\cat 1:
\begin{align*}
\newtwocell{
    \draw [white] (0.2,1) to (1.8,1);
    \begin{pgfonlayer}{foreground}
    \end{pgfonlayer}
    \draw [fill=\syntaxfill, draw=none] (0.2,0.5)
        to (0.2,1.5)
        to (1.8,1.5)
        to (1.8,0.5)
        to (0.2,0.5);
}
{Nontrivial system \cat G}
{}
\separatetwocells
\newtwocell{
    \draw [white] (0.2,1) to (1.8,1);
    \draw [white] (0.5,0.5) to (0.5,1.5);}
{Trivial system \cat 1}
{}
\end{align*}
Our systems have left and right boundaries, represented in the following way:
\begin{align*}
\newtwocell{
    \draw [white] (0.2,1) to (1.8,1);
    \draw [fill=\syntaxfill, draw=none] (0.2,0.5)
        to (1,0.5) to (1,1.5) to (0.2,1.5) to (0.2,0.5);
    \draw [thick] (1,0.5) to (1,1.5);
}
{Right boundary $\GB$}
{}
\separatetwocells
\newtwocell{
    \draw [white] (0.2,1) to (1.8,1);
    \draw [fill=\syntaxfill, draw=none] (1.8,0.5)
        to (1,0.5) to (1,1.5) to (1.8,1.5) to (1.8,0.5);
    \draw [thick] (1,0.5) to (1,1.5);
}
{Left boundary $\BG$}
{}
\end{align*}
Certain processes can take place which change the connectivity of our systems, represented by the following diagrams, which we interpret as taking place over time, reading from bottom to top:
\begin{align*}
\newtwocell{
    \draw [fill=\syntaxfill, draw=none] (0.2,-0.5)
        to (1.8,-0.5)
        to (1.8,-1.5)
        to (0.2,-1.5)
        to (0.2,-0.5);
    \draw [fill=white, thick] (0.5,-0.5)
        to [out=down, in=down, looseness=1.5] (1.5,-0.5);
}
{Separation}
{$\id _{\cat G} \xtoo {\mu ^\dag} \GB \circ \BG$}
\separatetwocells
\newtwocell{
    \draw [fill=\syntaxfill, draw=none] (0.2,0.5)
        to (1.8,0.5)
        to (1.8,1.5)
        to (0.2,1.5)
        to (0.2,0.5);
    \draw [fill=white, thick] (0.5,0.5)
        to [out=up, in=up, looseness=1.5] (1.5,0.5);
}
{Gluing}
{$\GB \circ \BG \xtoo {\mu} \id _{\cat G}$}
\separatetwocells
\newtwocell{
    \draw [white] (0.2,1) to (1.8,1);
    \draw [fill=\syntaxfill, thick] (0.5,1.5)
        to [out=down, in=down, looseness=1.5] (1.5,1.5);
    \draw [thick, white] (1,0.5) to (1,1);
}
{Creation}
{$\id _{\cat 1} \xtoo {\epsilon ^\sdag} \BG \circ \GB$}
\separatetwocells
\newtwocell{
    \draw [white] (0.2,-1) to (1.8,-1);
    \draw [fill=\syntaxfill, thick] (0.5,-1.5)
        to [out=up, in=up, looseness=1.5] (1.5,-1.5);
    \draw [thick, white] (1,-0.5) to (1,-1);
}
{Destruction}
{$\BG \circ \GB \xtoo \epsilon \id _{\cat 1}$}
\end{align*}

\noindent
We impose the following equivalences between composite operations, which correspond mathematically to the ambidextrous adjointness conditions~\citep{l06-faaa}:
\def\innerboxsep{8pt}
\def\compactwidthscale{0.55}
\def\compactheightscale{0.7}
\def\drawspecialboxA{    \draw [black!20]
        ([xshift=-\compactheightscale*\innerboxsep, yshift=-\compactwidthscale*\innerboxsep] -3,0)
        rectangle
        ([xshift=\compactheightscale*\innerboxsep, yshift=\compactwidthscale*\innerboxsep] 1,2);}
\def\drawspecialboxB{    \draw [black!20]
        ([xshift=-\compactheightscale*\innerboxsep, yshift=-\compactwidthscale*\innerboxsep] -1,0)
        rectangle
        ([xshift=\compactheightscale*\innerboxsep, yshift=\compactwidthscale*\innerboxsep] 1,2);}
\begin{gather}
\label{eq:top1}
\begin{aligned}
\begin{tikzpicture}[xscale=\compactwidthscale, yscale=\compactheightscale]
\node (a) at (0,0) [inner sep=0pt] {};
\node (b) at (0,1) [inner sep=0pt] {};
\node (c) at (-1,1) {};
\node (d) at (-2,1) {};
\node (e) at (-2,2) [inner sep=0pt] {};
\draw [fill=\syntaxfill, draw=none] (a.center)
    to (b.center)
    to [out=up, in=up, looseness=1.5] (c.center)
    to [out=down, in=down, looseness=1.5] (d.center)
    to (e.center)
    to (-3,2) to (-3,0) to (0,0);
\draw [thick] (a.center)
    to (b.center)
    to [out=up, in=up, looseness=1.5] (c.center)
    to [out=down, in=down, looseness=1.5] (d.center)
    to (e.center);
\draw [white] (1,0) to (1,1);
\drawspecialboxA
\end{tikzpicture}
\end{aligned}
\hspace{4pt}
=
\hspace{4pt}
\begin{aligned}
\begin{tikzpicture}[xscale=\compactwidthscale, yscale=\compactheightscale]
\node (a) at (0,0) {};
\node (e) at (0,2) {};
\draw [fill=\syntaxfill, draw=none] (a.center)
    to (e.center)
    to (-1,2) to (-1,0) to (0,0);
\draw [thick] (a.center)
    to (e.center);
\draw [white] (1,0) to (1,1);
\drawspecialboxB
\end{tikzpicture}
\end{aligned}
\hspace{4pt}
=
\hspace{4pt}
\begin{aligned}
\begin{tikzpicture}[xscale=\compactwidthscale, yscale=-\compactheightscale]
\node (a) at (0,0) {};
\node (b) at (0,1) {};
\node (c) at (-1,1) {};
\node (d) at (-2,1) {};
\node (e) at (-2,2) {};
\draw [fill=\syntaxfill, draw=none] (a.center)
    to (b.center)
    to [out=up, in=up, looseness=1.5] (c.center)
    to [out=down, in=down, looseness=1.5] (d.center)
    to (e.center)
    to (-3,2) to (-3,0) to (0,0);
\draw [thick] (a.center)
    to (b.center)
    to [out=up, in=up, looseness=1.5] (c.center)
    to [out=down, in=down, looseness=1.5] (d.center)
    to (e.center);
\draw [white] (1,0) to (1,1);
\drawspecialboxA
\end{tikzpicture}
\end{aligned}
\\
\label{eq:top2}
\begin{aligned}
\begin{tikzpicture}[xscale=-\compactwidthscale, yscale=\compactheightscale]
\node (a) at (0,0) {};
\node (b) at (0,1) {};
\node (c) at (-1,1) {};
\node (d) at (-2,1) {};
\node (e) at (-2,2) {};
\draw [fill=\syntaxfill, draw=none] (a.center)
    to (b.center)
    to [out=up, in=up, looseness=1.5] (c.center)
    to [out=down, in=down, looseness=1.5] (d.center)
    to (e.center)
    to (-3,2) to (-3,0) to (0,0);
\draw [thick] (a.center)
    to (b.center)
    to [out=up, in=up, looseness=1.5] (c.center)
    to [out=down, in=down, looseness=1.5] (d.center)
    to (e.center);
\draw [white] (1,0) to (1,1);
\drawspecialboxA
\end{tikzpicture}
\end{aligned}
\hspace{4pt}
=
\hspace{4pt}
\begin{aligned}
\begin{tikzpicture}[xscale=-\compactwidthscale, yscale=\compactheightscale]
\node (a) at (0,0) {};
\node (e) at (0,2) {};
\draw [fill=\syntaxfill, draw=none] (a.center)
    to (e.center)
    to (-1,2) to (-1,0) to (0,0);
\draw [thick] (a.center)
    to (e.center);
\draw [white] (1,0) to (1,1);
\drawspecialboxB
\end{tikzpicture}
\end{aligned}
\hspace{4pt}
=
\hspace{4pt}
\begin{aligned}
\begin{tikzpicture}[xscale=-\compactwidthscale, yscale=-\compactheightscale]
\node (a) at (0,0) {};
\node (b) at (0,1) {};
\node (c) at (-1,1) {};
\node (d) at (-2,1) {};
\node (e) at (-2,2) {};
\draw [fill=\syntaxfill, draw=none] (a.center)
    to (b.center)
    to [out=up, in=up, looseness=1.5] (c.center)
    to [out=down, in=down, looseness=1.5] (d.center)
    to (e.center)
    to (-3,2) to (-3,0) to (0,0);
\draw [thick] (a.center)
    to (b.center)
    to [out=up, in=up, looseness=1.5] (c.center)
    to [out=down, in=down, looseness=1.5] (d.center)
    to (e.center);
\draw [white] (1,0) to (1,1);
\drawspecialboxA
\end{tikzpicture}
\end{aligned}
\end{gather}
These axioms say that we can topologically deform any boundary without changing the process that it describes.

\subsection{Representing the syntax}
\label{sec:representation}

\noindent
We now show how the syntax is represented in our bicategorical semantics \GA. We simplify the discussion by restricting to \textit{skeletal} groupoids, which are groupoids in which isomorphic objects are equal. This is without loss of generality since every groupoid is equivalent to a skeletal one in our bicategory.

We first describe construction of the boundary morphisms from our groupoid structure.
\begin{defn}
The boundary morphisms are defined in the following way, for all $G \in \Ob(\cat G)$ and $G \xto g G$:
\begin{align*}
\BG (G \xto g G, \bullet) &\mapsto \Hom _{\cat G} (g, G)
\\
\GB (\bullet, G \xto g G) &\mapsto \Hom _{\cat G} (G, g)
\end{align*}
\end{defn}

\noindent
Intuitively the boundary morphisms are built from the set of morphisms of the skeletal groupoid, with action given by composition on the left or the right.

We next give the representation of the 4 basic processes in the theory. Here the symbol $\omega _{a,b}$ is defined to equal 1 if $a=b$, and 0 otherwise.
\begin{defn}
The separation, gluing, destruction and creation processes are defined as the following spans, for all $G \in \Ob(\cat G)$ and $g,g',g'': G \to G$:
\begin{align*}
\mu ^\dag (g, (g', g'')) &= \mu ((g', g''), g) = \omega _{g, g'g''}
\\
\epsilon((g,g'), \bullet) &=\epsilon ^\dag (\bullet, (g,g')) = \omega _{\id_G, gg'}
\end{align*}
\end{defn}

\noindent
These also have simple intuitive descriptions:

\begin{itemize}
\item
{\bf Separation $\mu ^\dag$.} When one system is separated into two, the logical state $G \in \Ob(\cat G)$ is maintained, and the microstate $g: G \to G$ is split into a nondeterministic sum over all pairs of microstates $g,g': G \to G$ such that $gg' = g$.
\item {\bf Gluing $\mu$.}
The gluing process is the converse of the separation process: if the logical states of the two initial systems match, the microstates are multiplied using the groupoid algebra; otherwise the gluing process fails.
\item {\bf Creation $\epsilon ^\dag$.}
The system is created nondeterministically in an arbitrary  logical state $G \in \Ob(\cat G)$, with only the identity microstate $\id _G$ occupied.
\item {\bf Destruction $\epsilon$.} If the system is in an identity microstate, then the system is removed, otherwise the process fails.
\end{itemize}
Given our physical interpretation, the creation process can be thought of as creating our system in a \textit{ground state}.

Some details must be checked to ensure that the definitions above are valid.
\begin{lemma}
The representations of $\mu$, $\mu ^\dag$, $\epsilon$ and $\epsilon ^\dag$ are well-defined, and satisfy the required naturality conditions.
\end{lemma}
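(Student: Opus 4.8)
The plan is to verify, for each of the four spans, the two things that "well-defined + natural" unpack into: first, that the stated formula genuinely defines a family of functions into $\N$ indexed by the hom-sets of the relevant composite profunctor (so nothing is being evaluated outside its domain, and nothing must be further quotiented); and second, that the family is natural, which for a 2\-morphism $\sigma : S \Rightarrow T$, being a natural transformation $S \times T \Rightarrow \N$ into the \emph{constant} functor, is exactly the statement that $\sigma$ is invariant under all the groupoid actions on $S \times T$. A convenient first reduction: $\mu ^\dag$ is the converse of $\mu$ and $\epsilon ^\dag$ the converse of $\epsilon$ in the sense of the converse construction, and since the symmetry of the cartesian product is natural, the converse of a well-defined 2\-morphism is again one; so it suffices to treat $\mu$ and $\epsilon$. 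As a preliminary I would also record that $\BG$ and $\GB$ are honest profunctors --- each being a coproduct over $\Ob(\cat G)$ of (co)representable functors --- and that every set in sight is finite by local finiteness of $\cat G$.

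For $\mu : \GB \circ \BG \Rightarrow \id _{\cat G}$, the key observation is that the intermediate groupoid in this composite is the trivial groupoid $\cat 1$, so the coend defining profunctor composition collapses to an ordinary cartesian product: in the skeletal case $(\GB \circ \BG)(G,G) = \Hom _{\cat G}(G,G) \times \Hom _{\cat G}(G,G)$ while $\id _{\cat G}(G,G) = \Hom _{\cat G}(G,G)$, so $\mu((g',g''),g) = \omega _{g, g'g''}$ does land exactly where it should, with no quotient to respect. For naturality, the left and right $\cat G$\-actions carried by $(\GB \circ \BG) \times \id _{\cat G}$ send $((g',g''),g)$ to $((bg', g''a),\, bga)$ for automorphisms $a, b$ of $G$, and invariance of $\omega _{g, g'g''}$ under this is immediate from left/right cancellation in the groupoid, since $bga = b(g'g'')a$ iff $g = g'g''$.

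For $\epsilon : \BG \circ \GB \Rightarrow \id _{\cat 1}$ the composite runs over $\cat G$, so here the coend is a genuine quotient: $(\BG \circ \GB)(\bullet,\bullet)$ is $\coprod _G \Hom _{\cat G}(G,G) \times \Hom _{\cat G}(G,G)$ modulo the identifications $(ga, g') \sim (g, ag')$ induced by the isomorphisms $a$ of $\cat G$ (automorphisms, in the skeletal case). The one point with any content is that $\epsilon((g,g'),\bullet) = \omega _{\id _G, gg'}$ must be constant on these classes, and it is, because $(ga)g' = g(ag')$ by associativity, so the product $gg'$ --- hence $\omega _{\id _G, gg'}$ --- depends only on the class. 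Naturality is then automatic, the only remaining variables being those of $\cat 1$, on which all actions are trivial; applying the converse reduction to $\mu ^\dag$ and $\epsilon ^\dag$ finishes the proof.

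The only part that is more than routine bookkeeping is the $\epsilon$/$\epsilon ^\dag$ step: reconciling the concrete "two endomorphisms of a single object" presentation of the formula with the formal coend computing $\BG \circ \GB$, and checking that no further identifications have been overlooked. The skeletal reduction is doing the real work there, turning the coend over $\cat G$ into an explicit coproduct-with-relations; once that is in place, everything else follows from the single observation that every group action in play is by composition with an isomorphism, under which the defining equations $g = g'g''$ and $gg' = \id$ are visibly preserved.
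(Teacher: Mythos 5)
Your proposal is correct and follows essentially the same route as the paper: the substance in both cases is the direct verification that $\omega_{g,g'g''}$ is invariant under pre- and post-composition by automorphisms of $G$ (the paper's $\mu((g_0g',g''g_1),g_0gg_1)=\mu((g',g''),g)$ is exactly your $((bg',g''a),bga)$ computation), with the $\dag$-versions handled by symmetry and the $\epsilon$ naturality conditions being trivial over $\cat 1$. Your additional check that $\omega_{\id_G,gg'}$ descends to the coend quotient $(ga,g')\sim(g,ag')$ defining the composite through $\cat G$ is a genuine point of well-definedness that the paper's proof leaves implicit, and is a worthwhile supplement rather than a deviation.
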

\begin{proof}
For $\mu$, the naturality condition is that for all objects $G \in \Ob ( \cat G )$ and all $g_0, g_1,g, g', g'' : G \to G$ we have
\begin{align*}
&& \mu ((g_0 g' , g'' g_1), g_0g g_1) &= \mu ((g', g''), g )
\\
\Leftrightarrow \hspace{-0.8cm}
&& \omega _{g_0 g' g'' g_1, g_0 g g_1} & = \omega _{g'g'', g},
\end{align*}
which is clearly satisfied. The proof for $\mu ^\dag$ is similar. For $\epsilon$ and $\epsilon ^\dag$ the naturality conditions are empty.
\end{proof}
\begin{lemma}
The representations of $\mu$, $\mu ^\sdag$, $\epsilon$ and $\epsilon ^\dag$ satisfy the topological conditions~(\ref{eq:top1}--\ref{eq:top2}).
\end{lemma}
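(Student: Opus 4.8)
The topological conditions (\ref{eq:top1}--\ref{eq:top2}) are precisely the zigzag (``snake'') identities witnessing that $\GB$ and $\BG$ are ambidextrously adjoint: read as string diagrams, each of the four zigzag boundaries is built by composing (whiskerings of) two of the generating $2$-cells $\mu,\mu^\dag,\epsilon,\epsilon^\dag$, one at each cusp, and the claim is that this composite equals the identity $2$-cell on a straight $\GB$ or $\BG$. The plan is to unwind each side of each equation into an explicit span and verify it by the same ``$\omega$-calculus'' used in the previous lemma.

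Recall that a $2$-cell $\sigma : S \Rightarrow T$ in $\GA$ is recorded by its values $\sigma(s,t) \in \N$; vertical composition is span composition, so $(\tau \cdot \sigma)(s,u) = \sum_t \sigma(s,t)\,\tau(t,u)$; horizontal composition is given by the coend sum in the text; and the identity $2$-cell on $S$ is the diagonal span $(s,s') \mapsto \omega_{s,s'}$. First I would fix, for each of the four diagrams, which whiskered composite of $\mu,\mu^\dag,\epsilon,\epsilon^\dag$ it denotes, reading the picture directly; in the skeletal setting the associators and unitors of $\GA$ contribute only evident reindexing bijections and can be suppressed. Then I would expand each composite by applying the vertical- and horizontal-composition rules once per vertex, substituting the defining values $\mu^\dag(g,(g',g'')) = \omega_{g,g'g''}$, $\epsilon^\dag(\bullet,(g,g')) = \omega_{\id_G, gg'}$ and the Hom-set descriptions of $\GB$ and $\BG$ together with their composition actions. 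Each side then becomes a finite sum --- indexed by morphisms of the skeletal groupoid, and in the diagrams involving $\epsilon^\dag$ also by its objects --- of products of Kronecker symbols $\omega$.

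The crux is that every such sum telescopes to a single $\omega$, matching the identity span. This is where the invertibility of the morphisms of $\cat G$ is essential: a typical contribution has the shape $\sum_{G \xto{g} G} \omega_{a,\,bg}\,\omega_{\id_G,\,gc}$, where the second factor forces $g = c^{-1}$ and leaves $\omega_{a,\,bc^{-1}}$, which is exactly the value of the identity $2$-cell after the corresponding deformation of the boundary. Sums over objects collapse for the analogous reason: the Hom-sets defining $\GB$ and $\BG$ are empty unless the relevant source and target objects agree, so only the matching summand survives. Carrying this out for all four diagrams, and checking each time that the surviving $\omega$ is indexed precisely as the identity on $\GB$ (respectively $\BG$) requires, completes the proof. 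I expect the only real obstacle to be bookkeeping --- keeping the coend index and the left/right actions straight through the nested composites, and correctly pairing each string-diagram equation with its zigzag and orientation --- since, once written out, each case reduces to a manifestly true identity about $\omega$ and groupoid multiplication, just as in the previous lemma.
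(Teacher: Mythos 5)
Your proposal is correct and matches the paper's own argument: the paper likewise expands each zigzag composite using the defining $\omega$-values of $\mu$, $\mu^\dag$, $\epsilon$, $\epsilon^\dag$ and observes that the resulting sum collapses by invertibility of the groupoid morphisms (e.g.\ $\sum_{g'}\omega_{\id_G,\,g'g}\,g'^{-1}=g$), exactly your telescoping step. The only difference is presentational --- the paper traces elements through the spans for two of the four diagrams and declares the rest similar, whereas you propose writing all four out as products of Kronecker symbols.
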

\begin{proof}
This follows from the groupoid axioms. The first diagram of \eqref{eq:top1} acts on some $g:G \to G$ for $G \in \Ob(\cat G)$ the the following way:
$$\textstyle g \stackrel {\,\,\mu ^\dag} \mapsto \sum _{g':G \to G} (g' {}^{-1},g', g) \stackrel \epsilon \mapsto \sum _{g':G \to G} \omega _{\id_G, g'g} \, g' {}^{-1} = g$$
This verifies the first equality of \eqref{eq:top1}. The third diagram of that expression acts on some $g:G \to G$ as follows:
$$g \stackrel {\,\epsilon ^\dag} \mapsto (g, \id_G) \stackrel \mu \mapsto g \, \id_G = g$$
This verifies the second
equality of \eqref{eq:top1}. The equalities of~\eqref{eq:top2} can be proven in a similar way.
\end{proof}

\subsection{Phenomenology}
\label{sec:phenomenology}

\noindent
We now consider the behaviour of logical states and microstates in our model. We begin by analyzing the case when we have an entire piece of computational material under our control, represented by the following diagram:
\def\innerboxsep{3pt}
\begin{align*}
\newtwocell{
    \draw [white] (0.2,1) to (1.8,1);
    \draw [fill=\syntaxfill, draw=none] (0.6,0.5)
        to (1.4,0.5) to (1.4,1.5) to (0.6,1.5) to (0.6,0.5);
    \draw [thick] (1.4,0.5) to (1.4,1.5);
    \draw [thick] (0.6,0.5) to (0.6,1.5);
}
{Entire system}
{$\cat 1 \xto \BG \cat G \xto \GB \cat 1$}
\end{align*}
This is an endomorphism of the trivial groupoid \cat 1, and so corresponds to a finite set in our model.
\begin{lemma}
\label{lem:microstates}
For a skeletal groupoid \cat G we have $$\BG \circ \GB \simeq \Mor(\cat G).$$
\end{lemma}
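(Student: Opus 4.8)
The plan is to evaluate the composite profunctor $\BG\circ\GB\colon\cat 1\to\cat 1$ — which, since the diagram above reads $\cat 1\xto{\BG}\cat G\xto{\GB}\cat 1$, is an endomorphism of $\cat 1$ and hence a plain set — directly from the coend formula for profunctor composition together with the explicit boundary definitions. The standard composite~\cite[Proposition~7.8.2]{borceux} is
\[
\BG\circ\GB \;=\; \int^{x\in\cat G}\GB(\bullet,x)\times\BG(x,\bullet),
\]
where $\GB$ enters covariantly in $x$ and $\BG$ contravariantly; for a locally finite groupoid this coend is a finite colimit, obtained as the quotient of $\coprod_{x\in\Ob(\cat G)}\GB(\bullet,x)\times\BG(x,\bullet)$ by the equivalence relation generated by the $\cat G$\-action.

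The next step is to unwind the boundary morphisms. Since $\cat G$ is skeletal it is a disjoint union of one\-object groupoids, so every morphism with source or target $x$ is an automorphism of $x$; the boundary definitions then identify $\GB(\bullet,-)$ with the copresheaf $\coprod_{x\in\Ob(\cat G)}\Hom_{\cat G}(x,-)$ under post\-composition and $\BG(-,\bullet)$ with the presheaf $\coprod_{x\in\Ob(\cat G)}\Hom_{\cat G}(-,x)$ under pre\-composition — in each case a coproduct of representables. Substituting these and pulling the coproducts out of the coend gives
\[
\BG\circ\GB \;\cong\; \coprod_{y,z\in\Ob(\cat G)}\ \int^{x\in\cat G}\Hom_{\cat G}(z,x)\times\Hom_{\cat G}(x,y),
\]
and the co\-Yoneda (density) formula collapses each inner coend to $\Hom_{\cat G}(z,y)$, so $\BG\circ\GB\cong\coprod_{y,z\in\Ob(\cat G)}\Hom_{\cat G}(z,y)$. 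Because every morphism of $\cat G$ sits in exactly one of these summands — the one indexed by its codomain and domain — the map sending a morphism to itself is a bijection onto $\Mor(\cat G)$; and since an isomorphism of $1$\-morphisms $\cat 1\to\cat 1$ in $\GA$ amounts to a bijection of underlying sets, this is the claim. A more concrete route, in the spirit of the other calculations above, is to split the coend over the components of $\cat G$ at the outset: over a one\-object groupoid with automorphism group $G_x$ the coend is the orbit set of $G_x$ acting on $G_x\times G_x$ by $h\cdot(a,b)=(h^{-1}a,bh)$, and $(a,b)\mapsto ba$ is invariant and descends to a bijection of that orbit set with $G_x$; summing over $x$ again yields $\Mor(\cat G)$.

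I expect the delicate part to be not a single hard idea but the bookkeeping around conventions: fixing the order in which the two boundaries compose, keeping variances straight so that the coend is well\-formed, and checking that the ``composition on the left or the right'' actions in the boundary definitions are exactly the regular actions used above — if the two sides happen to be interchanged, the argument is unaffected by symmetry. With those conventions pinned down, both the density computation and the explicit orbit count go through routinely, and they agree.
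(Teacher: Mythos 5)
Your proof is correct. The paper itself omits the proof of this lemma, so there is nothing to compare against; but your argument---unwinding the profunctor composite as the coend $\int^{x}\GB(\bullet,x)\times\BG(x,\bullet)$, recognising the boundary morphisms as coproducts of representables so that co-Yoneda collapses the coend to $\coprod_{y,z}\Hom_{\cat G}(z,y)=\Mor(\cat G)$ for skeletal $\cat G$---is exactly the computation the authors must have in mind, and your concrete cross-check (the orbit set of $G_x$ acting on $G_x\times G_x$ by $h\cdot(a,b)=(h^{-1}a,bh)$ biject with $G_x$ via $(a,b)\mapsto ba$) correctly resolves the variance and ordering conventions you flag as the only delicate point.
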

\begin{proof}
\ifbool{arxiv}{Omitted.}{Omitted due to space constraints.}
\end{proof}

\noindent
This has important consequences for our thermodynamic interpretation: if the entire system can be controlled at once, then the set of accessible states is the set of morphisms of the skeletal groupoid, which for us corresponds to the complete set of microstates.

However, we now examine the case that only a part of an extended system can be accessed. We allow in general for it to interact with a second free system with some set $S$ of states. We call this a \textit{controlled operation}, and describe it geometrically in the following way:
\def\innerboxsep{3pt}
\begin{align*}
\newtwocell{
    \draw [thick, white] (0.2,0.5) to (1.8,0.5);
    \draw [fill=\syntaxfill, draw=none] (0.2,0.5)
        to (0.2,1.5) to (0.7,1.5) to (0.7,0.5) to (0.7,0.5);
    \draw [thick] (0.7,0.5) to (0.7,1.5);
    \draw [thick] (1.3,0.5) to (1.3,1.5);
    \node [draw, thick, minimum width=1.2cm, fill=white, minimum height=0.5cm] at (1,1) {$\sigma$};
}
{Controlled operation}
{$\GB \circ S \xtoo \sigma \GB \circ S$}
\end{align*}
Here $\cat G \xto \GB \cat 1$ represents part of the boundary of the extended system, and $\cat 1 \xto S \cat 1$ represents the free system. We now classify the 2\-morphisms of this type in our model.
\begin{lemma}
\label{lem:controlled}
Given a skeletal groupoid \cat G and a set $S$, a local operation $\GB \circ S \xtoo \sigma \GB \circ S$ is determined by spans of sets $S \to \Hom_\cat G (G,G) \times S$ for each $G \in \Ob(\cat G)$.
\end{lemma}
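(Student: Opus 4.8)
The plan is to unwind the two layers of structure in turn —- first compute the composite $1$-morphism $\GB\circ S$, then analyse a $2$-morphism between two copies of it -— using skeletality of $\cat G$ to reduce all the naturality conditions to those coming from automorphisms. First I would identify the composite. Since $S:\cat 1\to\cat 1$ is the set $S$ with trivial action, and $\GB:\cat G\to\cat 1$ sends $G\in\Ob(\cat G)$ to $\Hom_\cat G(G,G)$ with $g:G\to G$ acting by left composition, unwinding the coend defining profunctor composition (a coend over $\cat 1$, hence trivial) shows that $\GB\circ S:\cat G\to\cat 1$ is the functor $\cat G\to\cat{Set}$ with
\[
(\GB\circ S)(G)=\Hom_\cat G(G,G)\times S,\qquad g:G\to G\ \text{acting by}\ (h,s)\mapsto(gh,s).
\]
This is the computation behind Lemma~\ref{lem:microstates}, carrying along the spectator factor $S$.

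Next I would unpack the $2$-morphism. By definition $\sigma:(\GB\circ S)\Rightarrow(\GB\circ S)$ is a natural transformation $(\GB\circ S)\times(\GB\circ S)\Rightarrow\N$, i.e.\ a family of functions
\[
\sigma_G:\big(\Hom_\cat G(G,G)\times S\big)\times\big(\Hom_\cat G(G,G)\times S\big)\longrightarrow\N,\qquad G\in\Ob(\cat G),
\]
natural in $G$. Because $\cat G$ is skeletal, its only morphisms are automorphisms $g:G\to G$, and because $\N$ sends every morphism to an identity, naturality at $g$ is exactly the invariance $\sigma_G\big((gh_1,s_1),(gh_2,s_2)\big)=\sigma_G\big((h_1,s_1),(h_2,s_2)\big)$ under the diagonal left action of $\Hom_\cat G(G,G)$. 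So $\sigma$ is precisely a choice, for each $G$, of a $\Hom_\cat G(G,G)$-invariant $\N$-valued function on $\big(\Hom_\cat G(G,G)\times S\big)^2$.

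Finally I would pass to orbit representatives and re-read the data as a span. Each diagonal orbit contains a unique point with first $\Hom_\cat G(G,G)$-coordinate equal to $\id_G$ (translate by $h_1^{-1}$), and this point has trivial stabilizer; hence restriction along $(s_1,h,s_2)\mapsto\big((\id_G,s_1),(h,s_2)\big)$ is a bijection between the invariant functions above and arbitrary functions $S\times\Hom_\cat G(G,G)\times S\to\N$, with inverse $\tilde\sigma_G\mapsto\big(((h_1,s_1),(h_2,s_2))\mapsto\tilde\sigma_G(s_1,h_1^{-1}h_2,s_2)\big)$. An $\N$-valued function on $S\times\big(\Hom_\cat G(G,G)\times S\big)$ is in turn the same, up to isomorphism of apex, as a span of sets $S\to\Hom_\cat G(G,G)\times S$ (the apex having $\tilde\sigma_G(s,h,s')$ points above $(s,(h,s'))$), exactly as $2$-morphisms are identified with spans throughout. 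Assembling these bijections over all $G\in\Ob(\cat G)$ gives the statement.

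None of this is deep. The only points needing care are getting the variances right in the profunctor composition of the first step, and checking that the orbit-representative section is well defined and bijective in the last step; the skeletality hypothesis is precisely what makes the naturality analysis clean, since otherwise one would also have to track morphisms between distinct isomorphic objects. I expect the write-up to be short and to follow the same pattern as the proof of Lemma~\ref{lem:microstates}.
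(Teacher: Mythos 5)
Your proof is correct, but it takes a genuinely different route from the paper's. The paper argues graphically: using the separation/gluing/creation/destruction $2$\-morphisms and the topological axioms~(\ref{eq:top1}--\ref{eq:top2}), it bends the input $\GB$ wire around to set up a bijection between $2$\-morphisms $\GB \circ S \xtoo{\sigma} \GB \circ S$ and $2$\-morphisms $S \xtoo{\sigma'} (\BG \circ \GB) \circ S$, then invokes Lemma~\ref{lem:microstates} ($\BG \circ \GB \simeq \Mor(\cat G)$) and the decomposition $\Mor(\cat G) \simeq \bigcup_G \Hom_\cat G(G,G)$ to read off the span type. You instead compute everything by hand: the coend giving $\GB \circ S$, the naturality condition on $\sigma$ as diagonal $\Hom_\cat G(G,G)$\-invariance, and the passage to orbit representatives with first coordinate $\id_G$. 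Both are valid; the paper's version buys brevity and reuses the topological machinery (and the currying move~\eqref{eq:twistings'} itself), while yours is more self-contained --- notably, the paper's proof leans on Lemma~\ref{lem:microstates}, whose proof is omitted, whereas your orbit computation effectively supplies that content. Your explicit inverse formula $\sigma_G((h_1,s_1),(h_2,s_2)) = \tilde\sigma_G(s_1,h_1^{-1}h_2,s_2)$ also makes immediately visible the paper's subsequent claim that a controlled operation can only multiply the microstate by a group element and cannot change the logical state.
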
 
\begin{proof}
There is a bijective correspondence between morphisms of the type $\GB \circ S \xtoo \sigma \GB \circ S$ and morphisms of the type $S \xtoo {\sigma'} (\BG \circ \GB) \circ S$, which we construct in the following way:
\begin{align}
\begin{aligned}
\begin{tikzpicture} [thick]
    \draw [fill=\syntaxfill, draw=none] (0.0,0.3)
        to (0.0,1.7) to (0.7,1.7) to (0.7,0.3);
    \draw [thick] (0.7,0.3) to (0.7,1.7);
    \draw [thick] (1.3,0.3) to (1.3,1.7);
    \node [draw, thick, minimum width=1.2cm, fill=white, minimum height=0.5cm] at (1,1) {$\sigma$};
\end{tikzpicture}
\end{aligned}
\quad
\begin{aligned}
\begin{tikzpicture}[thick, decoration=snake]
\draw[->,decorate] (0,0) to (1,0);
\end{tikzpicture}
\end{aligned}
\quad
\begin{aligned}
\begin{tikzpicture} [thick]
    \draw [fill=\syntaxfill, draw=none] (0, 1.7)
    to (0, 0.8) 
    to [out=down, in=down, looseness=2](0.7,0.8) to (0.7,1.7);
    \draw [thick] (1.3,0.3) to (1.3,1.7);
    \draw [thick] (0, 1.7)
    to (0, 0.8) 
    to [out=down, in=down, looseness=2](0.7,0.8) to (0.7,1.7);
    \draw [thick] (1.3,0.3) to (1.3,1.7);
    \node [draw, thick, minimum width=1.2cm, fill=white, minimum height=0.5cm] at (1,1) {$\sigma$};
\end{tikzpicture}
\end{aligned}
\end{align}
\begin{align}
\qquad
\begin{aligned}
\begin{tikzpicture} [thick]
    \draw [fill=\syntaxfill, draw=none] (0.5,1)
        to (0.5,1.9) to (0.9,1.9) to (0.9,1);
    \draw [thick] (0.9,1) to (0.9,1.9);
        \draw [thick] (0.5,1) to (0.5,1.9);
    \draw [thick] (1.4,0.3) to (1.4,1.9);
    \node [draw, thick, minimum width=1.2cm, fill=white, minimum height=0.5cm] at (1,1) {$\sigma'$};
\end{tikzpicture}
\end{aligned}
\quad
\begin{aligned}
\begin{tikzpicture}[thick, decoration=snake]
\draw[->,decorate] (0,0) to (1,0);
\end{tikzpicture}
\end{aligned}
\quad
\begin{aligned}
\begin{tikzpicture} [thick]
    \draw [fill=\syntaxfill, draw=none] (0.9,1.9) to (0.9,1.2) 
        to(0.5,1.2) 
        to [out=up, in=up, looseness=2](-0.2,1.2)
        to (-0.2, 0.3)
        to (-0.6, 0.3)
        to (-0.6, 1.9);
    \draw [thick] (0.9,1) to (0.9,1.9);
        \draw [thick] (0.5,1.2) 
        to [out=up, in=up, looseness=2](-0.2,1.2)
        to (-0.2, 0.3);
    \draw [thick] (1.4,0.3) to (1.4,1.9);
    \node [draw, thick, minimum width=1.2cm, fill=white, minimum height=0.5cm] at (1,1) {$\sigma'$};
\end{tikzpicture}
\end{aligned}
\label{eq:twistings'}
\end{align}
That these constructions are inverse follows from the topological properties of the calculus. By Lemma~\ref{lem:microstates} the type of $\sigma'$ is simply a span of sets $S \to \Mor(\cat G) \times S$, and the conclusion follows since $\Mor(\cat G) \simeq \bigcup _G \Hom _\cat G(G,G)$.
\end{proof}

\noindent
By considering the form of transformation~\eqref{eq:twistings'}, we see that $\sigma$ multiplies the microstate by some group element, and thus that the logical state cannot be changed.

We can describe the consequences of Lemma~\ref{lem:controlled} as follows: a controlled operation constitutes, for each logical state $G$ of the extended system \cat G, and for each state of the free system $S$, a rule that multiplies the underlying microstate of $G$ by some arbitrary group element, and also arbitrarily updates the state of the free system $S$.

We can use this to infer certain properties of logical states and microstates under the action of local operations:
\begin{enumerate}[label=(\arabic*)]
\item \textbf{Logical states are robust against  local perturbations} since there is provision for changing the logical state.

\item \textbf{Microstates are not robust against local perturbations} since it is clear that the microstate can change.

\item \textbf{Logical states can control conditional operations} since we can allow different behaviour depending on the logical state.

\item \textbf{Microstates cannot control conditional operations} since there is no provision for different behaviour depending on the initial microstate.

\item \textbf{Logical states can be ascertained by local operations} since we could choose to set $S$ to a different state for each logical state of the extended system.

\item \textbf{Microstates cannot be ascertained by local operations} since there is no provision for different behaviour depending on the initial microstate.
\end{enumerate}

\noindent
This analysis gives us a lot of information about how logical states and microstates behave. Since these are reasonable properties of a toy model of a thermodynamic system, we can feel confident that our thermodynamic interpretation is sound.

\tikzset{circlelabel/.style={draw, circle, fill=white, fill opacity=1, thick}}

\section{Modelling Encryption}
\label{sec:encryption}

\subsection{Introduction}

\noindent
The aim of this section is to develop a model of encrypted communication in our bicategory \cat{GpdAct}, which we achieve in Section~\ref{sec:concreteencryption}. We find that the encryption procedure is many-to-one on logical states, as required for security, yet invertible at the level of microstates, which is consistent with the discussion in Sections~\ref{sec:overview} and~\ref{sec:groupoidmodel}.

A key technical  contribution is made in Section~\ref{sec:bigproof}, where we prove syntactically that communication structures can be built from complementary structures. A similar result has been given previously in the monoidal category context~\cite{cd11-iqo}, although that proof assumes a strong extra property that we do not require here.

The results on this section rely on the existing bicategorical definitions of complementary structures and communication structures~\cite{v12-hqt,v12-hsqp}. It was shown in those papers that using a quantum semantics, communication structures correspond to quantum teleportation procedures, and complementary structures correspond to complementary observables, also called mutually-unbiased bases. In Section~\ref{sec:concretecomplementarity} we show that they also have models in our new categorical semantics.

We repeat the definitions of complementary and communication structures here, since they are central to our results. We use a slightly different form than given in the literature that avoids the presence of scalar factors. Recall that these diagrams describe processes taking place over time, where in our notation time runs from bottom to top. We assume throughout that our topological axioms~(\ref{eq:top1}--\ref{eq:top2}) are satisfied.
\begin{defn}
\label{def:complementarystructure}
A \textit{complementary structure} is a 2\-morphism $\delta$ of the following type, such that both $\delta$ and its partial transpose are unitary:

\begin{equation*}
\begin{aligned}
\begin{tikzpicture} [thick, scale=0.8, fill opacity=0.5]
\draw [fill=\fillB, draw=none, fill opacity=0.5] (0.7,0)
to [out=down, in=\neangle](0,-1.5)
to [out=\nwangle, in=down](-0.7, 0);
\draw  (0.7,0)
to [out=down, in=\neangle](0,-1.5)
to [out=\nwangle, in=down](-0.7, 0);
\draw [fill=\fillA, draw=none] (0.7,-3)
to [out=up, in=\seangle](0,-1.5)
to [out=\swangle, in=up](-0.7, -3);
\draw (0.7,-3)
to [out=up, in=\seangle](0,-1.5)
to [out=\swangle, in=up](-0.7, -3);
\node [circlelabel, rotate=10]
 at (0, -1.5) {$\delta$};
\end{tikzpicture}
\end{aligned}
\hspace{2cm}
\begin{aligned}
\begin{tikzpicture} [thick, scale=0.4, fill opacity=0.5]
\draw [fill=\fillA, draw=none, fill opacity=0.5] (-2.5, 3)
to (-1.5,3)
to [out=down, in=\nwangle](0,0)
to [out=\swangle, in=up](-1.5,-3)
to (-2.5,-3);
\draw (-1.5,3)
to [out=down, in=\nwangle](0,0)
to [out=\swangle, in=up](-1.5,-3);
\draw [fill=\fillB, draw=none] (2.5, 3)
to (1.5,3)
to [out=down, in=\neangle](0,0)
to [out=\seangle, in=up](1.5,-3)
to (2.5,-3);
\draw (1.5,3)
to [out=down, in=\neangle](0,0)
to [out=\seangle, in=up](1.5,-3);
\node [circlelabel, rotate=-80]
 at (0, 0) {$\delta$};
\end{tikzpicture}
\end{aligned}
\end{equation*}
\end{defn}
\begin{defn}
A \textit{communication procedure} is a 2\-morphism $\lambda$ of the following type, such that both $\lambda$ and its partial transpose are unitary:

\begin{equation*}
\begin{aligned}
\begin{tikzpicture} [thick, scale=0.8, fill opacity=0.5]
\draw [fill=\fillD, draw=none, fill opacity=0.5](0.78,2)
to [out=down, in=\neangle](0,0)
to [out=\nwangle, in=down](-0.78, 2);
\draw (0.78,2)
to [out=down, in=\neangle](0,0)
to [out=\nwangle, in=down](-0.78, 2);
\draw [fill=\fillB, draw=none] (0.78,-2)
to [out=up, in=\seangle](0.1,0.1)
to (0, -0.1)
to [out=\seangle, in=up](0.58, -2);
\draw (0.78,-2)
to [out=up, in=\seangle](0.1,0.1)
to (0, -0.1)
to [out=\seangle, in=up](0.58, -2);
\draw [fill=\fillA, draw=none] (-0.78,-2)
to [out=up, in=\swangle](-0.1,0.1)
to (0, -0.1)
to [out=\swangle, in=up](-0.58, -2);
\draw (-0.78,-2)
to [out=up, in=\swangle](-0.1,0.1)
to (0, -0.1)
to [out=\swangle, in=up](-0.58, -2);
\node [draw, fill=white, shape=circle, fill opacity=1]
 at (0, 0) {$\lambda$};
\end{tikzpicture}
\end{aligned}
\hspace{2cm}
\begin{aligned}
\begin{tikzpicture} [thick, scale=0.8, fill opacity=0.5]
\draw [fill=\fillD, draw=none, fill opacity=0.5]
(1.2,-2)to(0.5,-2)
to [out=up, in=\seangle](0,-0.2)
to (0, 0.2)
to [out=\neangle, in=down](0.5,2)
to (1.2,2);
\draw (0.58,-2)
to [out=up, in=\seangle](0,-0.2)
to (0, 0.2)
to [out=\neangle, in=down](0.58,2);
\draw [fill=\fillB, draw=none] (-0.78,-2)
to [out=up, in=\swangle](-0.1,-0)
to (0, -0.2)
to [out=\swangle, in=up](-0.58, -2);
\draw (-0.78,-2)
to [out=up, in=\swangle](-0.1,0)
to (0, -0.2)
to [out=\swangle, in=up](-0.58, -2);
\draw [fill=\fillA, draw=none] (-0.78,2)
to [out=down, in=\nwangle](-0.1,0)
to (0, 0.2)
to [out=\nwangle, in=down](-0.58, 2);
\draw (-0.78,2)
to [out=down, in=\nwangle](-0.1,0)
to (0, 0.2)
to [out=\nwangle, in=down](-0.58, 2);
\node [draw, fill=white, fill opacity=1,shape=circle, rotate=-90]
 at (0, 0) {$\lambda$};
\end{tikzpicture}
\end{aligned}
\end{equation*}
\end{defn}

\subsection{Communication from complementarity}
\label{sec:bigproof}

\noindent
We now see how communication structures and complementary structures are related. Recall from Section~\ref{sec:syntax} that overlapping regions denote the tensor product of the bicategory.

\begin{theorem}
\label{thm:commfromcomp}
We can build a communication structure from a complementary structure in the following way:
\begin{equation*}
\begin{aligned}
\begin{tikzpicture} [thick, scale=0.8, fill opacity=0.5]
\draw [fill=\fillD, draw=none, fill opacity=0.5](0.78,2)
to [out=down, in=\neangle](0,0)
to [out=\nwangle, in=down](-0.78, 2);
\draw (0.78,2)
to [out=down, in=\neangle](0,0)
to [out=\nwangle, in=down](-0.78, 2);
\draw [fill=\fillB, draw=none] (0.78,-2)
to [out=up, in=\seangle](0.1,0.1)
to (0, -0.1)
to [out=\seangle, in=up](0.58, -2);
\draw (0.78,-2)
to [out=up, in=\seangle](0.1,0.1)
to (0, -0.1)
to [out=\seangle, in=up](0.58, -2);
\draw [fill=\fillA, draw=none] (-0.78,-2)
to [out=up, in=\swangle](-0.1,0.1)
to (0, -0.1)
to [out=\swangle, in=up](-0.58, -2);
\draw (-0.78,-2)
to [out=up, in=\swangle](-0.1,0.1)
to (0, -0.1)
to [out=\swangle, in=up](-0.58, -2);
\node [draw, fill=white, shape=circle, fill opacity=1]
 at (0, 0) {$\lambda$};
\end{tikzpicture}
\end{aligned}
\quad:=\quad
\begin{aligned}
\begin{tikzpicture} [thick, scale=0.8, fill opacity=0.5]
\draw [fill=\fillA, draw=none, fill opacity=0.5]
(-0.9, -4.5)
to [out=up, in=left](0,-3.6)
to (0, -3.4)
to [out=left, in=down](-0.9, -2.5)
to (-1.1, -2.5)
to (-1.1, -4.5);
\draw (-0.9, -4.5)
to [out=up, in=left](0,-3.6)
to (0, -3.4)
to [out=left, in=down](-0.9, -2.5)
to (-1.1, -2.5)
to (-1.1, -4.5);
\draw [fill=\fillA, draw=none]
(1.1, -0.5)
to (1.1, -2.5)
to (0.9, -2.5)
to [out=up, in=down](-0.7, -0.5);
\draw (1.1, -0.5)
to (1.1, -2.5)
to (0.9, -2.5)
to [out=up, in=down](-0.7, -0.5);
\draw [fill=\fillB, draw=none]
(-1.1, -0.5)
to (-1.1, -2.5)
to (-0.9, -2.5)
to [out=up, in=down](0.7, -0.5);
\draw (-1.1, -0.5)
to (-1.1, -2.5)
to (-0.9, -2.5)
to [out=up, in=down](0.7, -0.5);
\draw [fill=\fillB, draw=none]
(0.9, -4.5)
to [out=up, in=right](0,-3.6)
to (0, -3.4)
to [out=right, in=down](0.9, -2.5)
to (1.1, -2.5)
to (1.1, -4.5);
\draw (0.9, -4.5)
to [out=up, in=right](0,-3.6)
to (0, -3.4)
to [out=right, in=down](0.9, -2.5)
to (1.1, -2.5)
to (1.1, -4.5);
\node [draw, fill=white, fill opacity=1, shape=circle, rotate=-80]
 at (0, -3.5) {$\delta$};
 \node [draw, fill=white, fill opacity=1, shape=circle, rotate=10]
 at (-1, -2.5) {$\delta$};
  \node [draw, fill=white, fill opacity=1, shape=circle, yscale=-1, rotate=10]
 at (1, -2.5) {$\delta$};
\end{tikzpicture}
\end{aligned}
\end{equation*}
\end{theorem}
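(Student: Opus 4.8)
The plan is to verify the two defining properties of a communication procedure for the $\lambda$ built on the right-hand side: that it has the correct type, and that both $\lambda$ and its partial transpose are unitary. The type check is immediate. Reading the three-$\delta$ composite from bottom to top and tracking the shaded regions and the bold boundary lines, one finds that the lower boundary carries one strand in each of the two shaded regions appearing in the complementary structure $\delta$, while the upper boundary carries a single strand in the third region; the cups and caps joining the three $\delta$ vertices consume all internal lines, so no spurious boundary is created. This matches the type of $\lambda$ in the definition of a communication procedure.

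For unitarity of $\lambda$ I would argue entirely in the graphical calculus. To show $\lambda ^\dag \circ \lambda = \id$, substitute the definition into both factors, obtaining a closed diagram containing six copies of $\delta$ (three from $\lambda$, three from $\lambda ^\dag$, the latter appearing with reversed orientation coming from the converse operation). Then use the topological axioms (\ref{eq:top1}--\ref{eq:top2}) to straighten the boundary bends until, somewhere in the middle of the diagram, a $\delta$ meets a copy of $\delta$ in precisely the configuration required by the unitarity hypothesis on $\delta$; cancel that pair. This brings a new adjacent pair into contact, now in the configuration governed by unitarity of the \emph{partial transpose} of $\delta$ — cancel again. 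Iterating, the six vertices annihilate in pairs and the whole diagram collapses to the identity $2$-morphism on the lower boundary. The verification of $\lambda \circ \lambda ^\dag = \id$ is symmetric.

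For the partial transpose of $\lambda$ I would first rewrite it: bending the appropriate leg of the three-$\delta$ composite and applying the topological axioms, the partial transpose of $\lambda$ is again a composite of three $\delta$ vertices, one of which now occurs as its own partial transpose and with some legs re-routed. The same pair-cancellation argument, using only unitarity of $\delta$ and of its partial transpose, then reduces $(\lambda^{\text{PT}})^\dag \circ \lambda^{\text{PT}}$ and $\lambda^{\text{PT}} \circ (\lambda^{\text{PT}})^\dag$ to identities.

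The main obstacle is the middle cancellation step. A priori the two $\delta$ vertices one wishes to cancel are not in the standard unitary configuration, and making them so requires sliding a vertex past a cup or cap; this is legitimate precisely because of the specific placement of the three $\delta$'s in the theorem statement. This is exactly the point at which the earlier monoidal-category argument of~\cite{cd11-iqo} invoked an additional hypothesis (a Frobenius- or commutativity-type condition) — so the real content here is to show that, with the arrangement chosen, the topological axioms alone suffice to bring the vertices into position, leaving only the two unitarity conditions on $\delta$ in play. A secondary bookkeeping point is to confirm that the converse operation of $\GA$ distributes over horizontal and vertical composition in the way the graphical calculus suggests, so that the six-$\delta$ diagram for $\lambda ^\dag \circ \lambda$ is assembled correctly before the cancellations begin.
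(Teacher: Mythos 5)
Your plan is essentially the paper's own proof: substitute the three-$\delta$ definition into $\lambda^\dag \circ \lambda$, straighten the resulting six-vertex diagram with the topological axioms~(\ref{eq:top1}--\ref{eq:top2}), and cancel the $\delta$'s in pairs using unitarity of $\delta$ and of its partial transpose, which is exactly the chain of diagrams in equation~\eqref{eq:bigproof}. The only small divergence is that the paper avoids checking the reverse composites altogether by invoking the fact that for spans of sets $\phi \cdot \psi = \id \Leftrightarrow \psi \cdot \phi = \id$, so only $\lambda^\dag \circ \lambda = \id$ and $\lambda'{}^\dag \circ \lambda' = \id$ need verification, whereas you propose to handle the reverse composites by a symmetric argument --- both routes work.
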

\begin{proof}
We must show that both $\lambda$ and its partial transpose $\lambda'$ are unitary. For spans of sets $S \xto \phi T$ and $T \xto \psi S$, then $\phi \cdot \psi = \id \Leftrightarrow \psi \cdot \phi = \id$, and so we need only check $\lambda ^\dag \circ\lambda = \id$ and $\lambda' {}^\dag \circ \lambda' = \id$. We prove the first of these in equation~\eqref{eq:bigproof}, using the definition of $\lambda$, topological properties of the calculus, unitarity of $\delta$, and unitarity of the partial transpose of $\delta$\ifbool{arxiv}{:}{.}
\ifbool{arxiv}{\begin{equation}}{\begin{figure*}[t!]\begin{align}}
\ifbool{arxiv}{\def\ws{\hspace{0.2cm}}}{\quad}
\ifbool{arxiv}{\tikzset{every picture/.style={xscale=0.9}}}{}
\label{eq:bigproof}
\begin{aligned}
\begin{tikzpicture} [thick, scale=0.9, fill opacity=0.5]
\draw [fill=\fillD, draw=none, fill opacity=0.5] (0.65,0)
to [out=down, in=\neangle](0,-1)
to [out=\nwangle, in=down](-0.65, 0)
to [out=up, in=\swangle](0,1)
to [out=\seangle, in=up](0.65, 0);
\draw (0.65,0)
to [out=down, in=\neangle](0,-1)
to [out=\nwangle, in=down](-0.65, 0)
to [out=up, in=\swangle](0,1)
to [out=\seangle, in=up](0.65, 0);
\draw [fill=\fillB, draw=none] (0.78,-3)
to [out=up, in=\seangle](0.1,-0.9)
to (0, -1.1)
to [out=\seangle, in=up](0.58, -3);
\draw (0.78,-3)
to [out=up, in=\seangle](0.1,-0.9)
to (0, -1.1)
to [out=\seangle, in=up](0.58, -3);
\draw [fill=\fillA, draw=none] (-0.78,-3)
to [out=up, in=\swangle](-0.1,-0.9)
to (0, -1.1)
to [out=\swangle, in=up](-0.58, -3);
\draw  (-0.78,-3)
to [out=up, in=\swangle](-0.1,-0.9)
to (0, -1.1)
to [out=\swangle, in=up](-0.58, -3); 
\draw [fill=\fillB, draw=none] (0.78,3)
to [out=down, in=\neangle](0.1,0.9)
to (0, 1.1)
to [out=\neangle, in=down](0.58, 3);
\draw (0.78,3)
to [out=down, in=\neangle](0.1,0.9)
to (0, 1.1)
to [out=\neangle, in=down](0.58, 3);
\draw [fill=\fillA, draw=none] (-0.78,3)
to [out=down, in=\nwangle](-0.1,0.9)
to (0, 1.1)
to [out=\nwangle, in=down](-0.58, 3);
\draw (-0.78,3)
to [out=down, in=\nwangle](-0.1,0.9)
to (0, 1.1)
to [out=\nwangle, in=down](-0.58, 3);
\node [draw, fill=white, fill opacity=1,shape=circle]
 at (0, -1) {$\lambda$};
\node [draw, fill=white, fill opacity=1, shape=circle, yscale=-1]
 at (0, 1) {$\lambda$}; 
\end{tikzpicture}
\end{aligned}
\ws=\ws
\begin{aligned}
\begin{tikzpicture}[thick, scale=0.9, fill opacity=0.5]
\draw [fill=\fillA, draw=none, fill opacity=0.5] (-0.9,-3)
to (-0.9, -3)
to [out=up, in=left](0,-2.1)
to (0, -1.9)
to [out=left, in=down](-0.9, -1)
to (-1.1, -1)
to (-1.1, -3);
\draw [fill=\fillA, draw=none]
(1.1, -1)
to (0.9, -1)
to [out=up, in=down](-0.5, 0)
to [out=up, in=down] (0.9, 1)
to (1.1, 1);
\draw [fill=\fillB, draw=none]
(-1.1, -1)
to (-0.9, -1)
to [out=up, in=down](0.5, 0)
to [out=up, in=down] (-0.9, 1)
to (-1.1, 1);
\draw [fill=\fillB, draw=none] (0.9,-3)
to (0.9, -3)
to [out=up, in=right](0,-2.1)
to (0, -1.9)
to [out=right, in=down](0.9, -1)
to (1.1, -1)
to (1.1, -3); 
\draw [fill=\fillA, draw=none, yscale=-1]
(-0.9, -3)
to [out=up, in=left](0,-2.1)
to (0, -1.9)
to [out=left, in=down](-0.9, -1)
to (-1.1, -1)
to (-1.1, -3);
\draw [fill=\fillB, draw=none, yscale=-1]
 (0.9,-3)
to (0.9, -3)
to [out=up, in=right](0,-2.1)
to (0, -1.9)
to [out=right, in=down](0.9, -1)
to (1.1, -1)
to (1.1, -3);
\draw (0.9,-3)
to [out=up, in=right](0,-2.1)
to (0, -1.9)
to [out=right, in=down](0.9, -1)
to [out=up, in=down](-0.5, 0)
to [out=up, in=down](0.9, 1)
to [out=up, in=right](0, 1.9)
to (0,2.1)
to [out=right, in=down](0.9, 3);
\draw (-0.9,-3)
to [out=up, in=left](0,-2.1)
to (0, -1.9)
to [out=left, in=down](-0.9, -1)
to [out=up, in=down](0.5, 0)
to [out=up, in=down](-0.9, 1)
to [out=up, in=left](0, 1.9)
to (0,2.1)
to [out=left, in=down](-0.9, 3);
\draw (1.1, -3) to (1.1, 3);
\draw (-1.1, -3) to (-1.1, 3);
\node [draw, fill=white, fill opacity=1,shape=circle, rotate=-100, xscale=-1]
 at (0, 2) {$\delta$};
 \node [draw, fill=white, fill opacity=1, shape=circle, rotate=-10, yscale=-1]
 at (-1, 1) {$\delta$};
  \node [draw, fill=white, fill opacity=1, shape=circle, rotate=10]
 at (1, 1) {$\delta$}; 
 \node [draw, fill=white, fill opacity=1, shape=circle, rotate=-80]
 at (0, -2) {$\delta$};
 \node [draw, fill=white, fill opacity=1, shape=circle, rotate=10]
 at (-1, -1) {$\delta$};
  \node [draw, fill=white, fill opacity=1, shape=circle, rotate=-10, yscale=-1]
 at (1, -1) {$\delta$}; 
\end{tikzpicture}
\end{aligned}
\ws=\ws
\begin{aligned}
\begin{tikzpicture} [thick, scale=0.9, fill opacity=0.5]
\draw [fill=\fillA, draw=none, fill opacity=0.5](-0.9, -3)
to [out=up, in=left](0,-2.1)
to (0, -1.9)
to [out=left, in=down](-0.9, -1)
to (-1.1, -1)
to (-1.1, -3);
\draw [fill=\fillA, draw=none]
(1.1, 1)
to (1.1, -1)
to (0.9, -1)
to [out=up, in=down](0.9, 1);
\draw [fill=\fillB, draw=none]
(-1.1, 1)
to (-1.1, -1)
to (-0.9, -1)
to [out=up, in=down](-0.9, 1);
\draw [fill=\fillB, draw=none](0.9, -3)
to [out=up, in=right](0,-2.1)
to (0, -1.9)
to [out=right, in=down](0.9, -1)
to (1.1, -1)
to (1.1, -3); 
\draw [fill=\fillA, draw=none, yscale=-1] (-0.9,-3)
to [out=up, in=left](0,-2.1)
to (0, -1.9)
to [out=left, in=down](-0.9, -1)
to (-1.1, -1)
to (-1.1, -3);
\draw [fill=\fillB, draw=none, yscale=-1]
 (0.9, -3)
to [out=up, in=right](0,-2.1)
to (0, -1.9)
to [out=right, in=down](0.9, -1)
to (1.1, -1)
to (1.1, -3);
\draw (0.9,-3)
to (0.9, -3)
to [out=up, in=right](0,-2.1)
to (0, -1.9)
to [out=right, in=down](0.9, -1)
to (0.9, 1)
to [out=up, in=right](0, 1.9)
to (0,2.1)
to [out=right, in=down](0.9, 3);
\draw (-0.9, -3)
to [out=up, in=left](0,-2.1)
to (0, -1.9)
to [out=left, in=down](-0.9, -1)
to (-0.9, 1)
to [out=up, in=left](0, 1.9)
to (0,2.1)
to [out=left, in=down](-0.9, 3);
\draw (1.1, -3) to (1.1, 3);
\draw (-1.1, -3) to (-1.1, 3);
\node [draw, fill=white, fill opacity=1, shape=circle, rotate=-80]
 at (0, -2) {$\delta$};
 \node [draw, fill=white, fill opacity=1, shape=circle, rotate=10]
 at (-1, -1) {$\delta$};
  \node [draw, fill=white, fill opacity=1, shape=circle, rotate=-10,yscale=-1]
 at (1, -1) {$\delta$}; 
\node [draw, fill=white, fill opacity=1, shape=circle, rotate=-100, xscale=-1]
 at (0, 2) {$\delta$};
 \node [draw, fill=white, fill opacity=1, shape=circle, rotate=-10, yscale=-1]
 at (-1, 1) {$\delta$};
  \node [draw, fill=white, fill opacity=1, shape=circle, rotate=10]
 at (1, 1) {$\delta$}; 
\end{tikzpicture}
\end{aligned}
\ws=\ws
\begin{aligned}
\begin{tikzpicture} [thick, scale=0.9, fill opacity=0.5]
\draw [fill=\fillA, draw=none, fill opacity=0.5](-0.9, -3)
to [out=up, in=left](0,-2.1)
to (0, -1.9)
to [out=left, in=down](-0.9, -1)
to (-0.9, 1)
to [out=up, in=left](0, 1.9)
to (0,2.1)
to [out=left, in=down](-0.9, 3)
to (-0.9, 3)
to (-1.1, 3)
to (-1.1, -1)
to (-1.1, -3);
\draw [fill=\fillB, draw=none](0.9, -3)
to [out=up, in=right](0,-2.1)
to (0, -1.9)
to [out=right, in=down](0.9, -1)
to (0.9, 1)
to [out=up, in=right](0, 1.9)
to (0,2.1)
to [out=right, in=down](0.9, 3)
to (0.9, 3)
to (1.1, 3)
to (1.1, -1)
to (1.1, -3);
\draw (0.9, -3)
to [out=up, in=right](0,-2.1)
to (0, -1.9)
to [out=right, in=down](0.9, -1)
to (0.9, 1)
to [out=up, in=right](0, 1.9)
to (0,2.1)
to [out=right, in=down](0.9, 3);
\draw (-0.9, -3)
to [out=up, in=left](0,-2.1)
to (0, -1.9)
to [out=left, in=down](-0.9, -1)
to (-0.9, 1)
to [out=up, in=left](0, 1.9)
to (0,2.1)
to [out=left, in=down](-0.9, 3);
\draw (1.1, -3) to (1.1, 3);
\draw (-1.1, -3) to (-1.1, 3);
\node [draw, fill=white, fill opacity=1, shape=circle, rotate=-80]
 at (0, -2) {$\delta$}; 
\node [draw, fill=white, fill opacity=1, shape=circle, rotate=-100, xscale=-1]
 at (0, 2) {$\delta$};
\end{tikzpicture}
\end{aligned}
\ws=\ws
\begin{aligned}
\begin{tikzpicture} [thick, scale=0.9, fill opacity=0.5]
\draw [fill=\fillB, draw=none, fill opacity=0.5]
(0.4, -3) 
to(0.6, -3)
to (0.6, 3)
to (0.4, 3);
\draw [fill=\fillA, draw=none]
(-0.4, -3) 
to(-0.6, -3)
to (-0.6, 3)
to (-0.4, 3);
\draw (0.4, -3) to (0.4, 3);
\draw (0.6, -3) to (0.6, 3);
\draw (-0.4, -3) to (-0.4, 3);
\draw (-0.6, -3) to (-0.6, 3);
\end{tikzpicture}
\end{aligned}
\ifbool{arxiv}{\end{equation}}{\end{align}\end{figure*}}
The second can be proven similarly.
\end{proof}

\subsection{Complementary structures in \cat{GpdAct}}
\label{sec:concretecomplementarity}

\begin{lemma}
\label{lem:compingpdact}
Any group $\cat G$ gives rise to a complementary structure in \cat{GpdAct}.
\end{lemma}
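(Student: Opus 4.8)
The plan is to exhibit a specific 2-morphism $\delta$ in $\cat{GpdAct}$ built canonically from the group, to check that it has the type demanded by Definition~\ref{def:complementarystructure}, and then to verify the two unitarity conditions: that $\delta$ is unitary, and that its partial transpose $\delta'$ is unitary. Throughout, $\cat G$ is regarded as a one-object groupoid, so the boundary morphisms $\BG,\GB$ and the structure 2-cells $\mu,\mu^\dag,\epsilon,\epsilon^\dag$ of Section~\ref{sec:representation} are available. The key simplification is that the strands in the domain and codomain of $\delta$ are closed loops around $\cat G$, so by Lemma~\ref{lem:microstates} each is represented by the plain finite set $\Mor(\cat G)$, i.e.\ the underlying set of the group; I therefore take $\delta$ to be the 2-morphism assembled from $\mu,\mu^\dag,\epsilon,\epsilon^\dag$ in the configuration dictated by the diagram, whose net effect, once the topological scaffolding is stripped away, is to act on a microstate loop labelled by $g$ by translating it according to the group multiplication — a bijection of $\Mor(\cat G)$ in the case at hand.

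The first task is to confirm that this span really is a 2-morphism of the stated type, i.e.\ that it respects the groupoid actions. As in the two lemmas of Section~\ref{sec:representation}, this unwinds to an equality between products of the indicator symbols $\omega_{a,b}$, and follows directly from associativity of the group multiplication; I expect it to be routine bookkeeping.

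Unitarity of $\delta$ itself should then be immediate. By the span identity noted in the proof of Theorem~\ref{thm:commfromcomp} it suffices to check $\delta^\dag\circ\delta=\id$, and since $\delta$ underlies a bijection of $\Mor(\cat G)$ — with inverse the translation by the inverse group element — the composite span collapses to the identity span.

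The main obstacle is the second condition, unitarity of the partial transpose $\delta'$. Bending a leg of $\delta$ turns it into a 2-morphism between the profunctors $\BG\circ\GB\colon\cat G\to\cat G$, so here the groupoid actions genuinely enter and the bare-set argument no longer applies. I would compute $\delta'$ explicitly, using the representations of $\mu,\mu^\dag,\epsilon,\epsilon^\dag$ and the topological axioms~(\ref{eq:top1}--\ref{eq:top2}), identify it with the map built from $(g,k)\mapsto (g,g^{-1}k)$ (twisted by inversion) on the relevant hom-sets, and verify that this is a bijection compatible with the left and right actions. Its bijectivity — whence its unitarity, again by span reversal — is exactly the cancellation law in $\cat G$: that every row and column of the multiplication table is a permutation. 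This is the heart of the lemma, explaining why the construction succeeds for an \emph{arbitrary} group rather than only for some special family, and it is also the point at which the profunctor structure of $\cat{GpdAct}$ (as opposed to $\cat{Span(Gpd)}$) earns its keep, since it is Lemma~\ref{lem:microstates} that reduces the relevant loops to plain sets on which the group axioms can be read off directly. With both $\delta$ and $\delta'$ unitary, $\delta$ is the desired complementary structure.
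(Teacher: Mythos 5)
There is a genuine gap, and it sits at the heart of the construction. You take both the domain and the codomain of $\delta$ to be closed loops around the \emph{same} groupoid $\cat G$, but in Definition~\ref{def:complementarystructure} the two differently shaded regions denote two \emph{different} objects of the bicategory, and the paper's proof hinges on choosing them to be genuinely different: the lower region is the group $\cat G$ (one object, $|G|$ morphisms) and the upper region is the discrete groupoid $\cat{|G|}$ whose objects are the elements of $G$ and whose only morphisms are identities. The 2\-morphism $\delta$ is then just the evident bijection between $\Mor(\cat G)$ and $\Mor(\cat{|G|})$, and unitarity of $\delta$ itself is exactly as easy as you say. Where the second groupoid earns its keep is in the partial transpose. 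Tracing $\delta'{}^\dag \circ \delta'$, the separation $\mu^\dag$ on the $\cat G$ side produces the nondeterministic sum $\sum_{hh'=g}(h,h')$ over all factorizations of $g$; in the paper this collapses to a single term because the subsequent gluing takes place in $\cat{|G|}$, where composition is defined only when the objects agree, forcing $\delta(h')=\delta(g')$, hence $h'=g'$ and $h=g(g')^{-1}$. If the upper region is again the one-object groupoid $\cat G$, gluing is always defined, the sum never collapses, and the composite span carries the full sum over all $|G|$ factorizations (with off-diagonal terms besides), so it cannot be the identity for $|G|>1$ --- no bijection of $\Mor(\cat G)$ with itself yields a unitary partial transpose. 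The ``complementarity'' is precisely the opposition between a groupoid whose information is entirely in its microstates and one whose information is entirely in its logical states (the classical shadow of a pair of mutually unbiased bases), and that is the idea missing from your argument.

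A secondary problem: your candidate formula $(g,k)\mapsto(g,g^{-1}k)$ for the partial transpose does not intertwine the left and right $\cat G$\-actions on the profunctor $\BG\circ\GB$ (under the actions $h\cdot(g,k)\cdot h'=(hg,kh')$ the only equivariant functions are $(g,k)\mapsto(ga,bk)$ for fixed $a,b$), so it is not a 2\-morphism of the required type. The cancellation law does enter the correct proof, but through the collapse of the factorization sum described above, not through bijectivity of a self-map of the underlying set.
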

\begin{proof}
Let $\cat{G}$ be a group seen as a one-object groupoid. Let $\cat{|G|}$ be the discrete groupoid whose objects are given by $\Mor(\cat G)$, and whose only morphisms are identity morphisms. Then since \cat G and \cat{|G|} have isomorphic sets of morphisms, by Lemma~\ref{lem:microstates} we can find a unitary 2\-morphism $\delta$ of the following type:

\begin{equation}
\begin{aligned}
\begin{tikzpicture} [thick, scale=0.8, fill opacity=0.5]
\draw [fill=\fillB, draw=none, fill opacity=0.5] (0.7,0)
to [out=down, in=\neangle](0,-1.5)
to [out=\nwangle, in=down](-0.7, 0);
\draw  (0.7,0)
to [out=down, in=\neangle](0,-1.5)
to [out=\nwangle, in=down](-0.7, 0);
\draw [fill=\fillA, draw=none] (0.7,-3)
to [out=up, in=\seangle](0,-1.5)
to [out=\swangle, in=up](-0.7, -3);
\draw (0.7,-3)
to [out=up, in=\seangle](0,-1.5)
to [out=\swangle, in=up](-0.7, -3);
\node [draw, fill=white, fill opacity=1, rotate=10, shape=circle]
 at (0, -1.5) {$\delta$}; 
\node [fill opacity=1] at (0, -0.5) {$\cat{|G|}$}; 
\node [fill opacity=1] at (0, -2.5) {$\cat{G}$}; 
\end{tikzpicture}
\end{aligned}
\end{equation}
Interpreted as a span $\delta$ is a bijection between morphisms of \cat{G} and \cat{|G|}, and $\delta^{\dagger}$ is its converse, hence also a bijection. It follows that $\delta\circ\delta^{\dagger}=\delta^{\dagger}\circ\delta= \id$, which establishes the first property of Definition~\ref{def:complementarystructure}.

To establish the second property we pick arbitrary morphisms $g, g' \in \Mor(\cat{G})$, and verify the unitarity of the partial transpose of $\delta$ on the pair $(g, \delta(g'))$. Since $\delta$ is a bijection, this covers the entire domain of both sides of the equation. Each segment between the horizontal dashed lines is labelled with the current form of the initial pair $(g,\delta(g'))$:
\begin{align*}
\begin{aligned}
\begin{tikzpicture} [scale=0.8, yscale=0.8, fill opacity=0.5]
\draw [fill=\fillB, draw=none, fill opacity=0.5] (2.5, 0)
to (2.5, 5)
to (1.5,5)
to (1.5, 1.3)
to [out=down, in=down, looseness=2](0.7, 1.3)
to [out=up, in=\seangle](0, 2.3)
to [out=\swangle, in=up](-0.7,0)
to [out=down, in=\nwangle](-0,-2.3)
to [out=\neangle, in=down](0.7, -1.3)
to [out=up, in=up, looseness=2](1.5,-1.3)
to (1.5, -5)
to (2.5, -5);
\draw [thick](1.5,5)
to (1.5, 1.3)
to [out=down, in=down, looseness=2](0.7, 1.3)
to [out=up, in=\seangle](0, 2.3)
to  [out=\swangle, in=up](-0.7, 0)
to [out=down, in=\nwangle](-0,-2.3)
to [out=\neangle, in=down](0.7, -1.3)
to [out=up, in=up, looseness=2](1.5,-1.3)
to (1.5, -5);
\draw [fill=\fillA, draw=none] (-2.5,0)
to (-2.5, 5)
to (0.7, 5)
to (0.7, 4.5)
to [out=down, in=\neangle](0, 2.7)
to [out=\nwangle, in=down](-0.7, 3.7)
to [out=up, in=up, looseness=2](-1.5, 3.7)
to (-1.5, 0)
to (-1.5, -3.7)
to [out=down, in=down, looseness=2](-0.7, -3.7)
to [out=up, in=\swangle](0,-2.7)
to [out=\seangle, in=up](0.7, -4.5)
to (0.7, -5)
to (-2.5,-5);
\draw [thick](0.7, 5)
to (0.7, 4.5)
to [out=down, in=\neangle](0, 2.7)
to [out=\nwangle, in=down](-0.7, 3.7)
to [out=up, in=up, looseness=2](-1.5, 3.7)
to(-1.5, 0)
to (-1.5, -3.7)
to [out=down, in=down, looseness=2](-0.7, -3.7)
to [out=up, in=\swangle](0,-2.7)
to [out=\seangle, in=up](0.7, -4.5)
to (0.7, -5); 
\draw [dashed, gray] (-3, -4.15) to (7,-4.15); 
\draw [dashed, gray] (-3, -2.5) to (7,-2.5);
\draw [dashed, gray] (-3, -0.85) to (7,-0.85);
\draw [dashed, gray] (-3, 4.15) to (7,4.15); 
\draw [dashed, gray] (-3, 2.5) to (7,2.5);
\draw [dashed, gray] (-3, 0.85) to (7,0.85);
\draw [dotted, ->] (-1, -5.25) to (-1, -4.85);
\draw [dotted, ->] (-1, -4.85) to (-1, -4.75)
to [out=left, in=down](-2,-4)
to (-2, -1.5);
\draw [dotted, ->] (-2, -1.5) to (-2,1.5);
\draw [dotted, ->](-2,1.5)
to (-2,4)
to [out=up, in =left](-1,4.75)
to (-1, 5.1);
\draw [dotted] (-1, 5.1) to (-1, 5.25);
\draw [dotted, ->] (-1, -4.75)
to [out=right, in=down](0,-4)
to (0, -1.5);
\draw [dotted, ->](0, -1.5)
to (0,-1)
to [out=up, in=left](1, -0.25)
to (1,0);
\draw [dotted, ->](1,0)
to (1, 0.25)
to [out=left, in=down](0,1)
to (0,1.5);
\draw [dotted](0,1.5)
to (0,4)
to [out=up, in=right](-1, 4.75);
\draw [dotted, ->] (2, -5.25) to (2, -4.85);
\draw [dotted, ->] (2, -4.85)
to (2, -1.5);
\draw [dotted](2, -1.5)
to [out=up, in=right](1, -0.25);
\draw [dotted, ->] (1, 0.25)
to [out=right, in=down](2, 1.5);
\draw [dotted, ->](2, 1.5)
to (2, 5.1);
\draw [dotted] (2,5.1) to (2, 5.25);
\node [draw, fill=white, fill opacity=1, rotate=10, shape=circle, thick]
 at (0, -2.5) {$\delta$}; 
\node [draw, fill=white, fill opacity=1, yscale=-1, rotate=10, shape=circle, thick]
 at (0, 2.5) {$\delta$};
\node [fill opacity=1] at (6.3, -4.8){$(g, \delta(g'))$};
\node [fill opacity=1] at (5.3, -3.4){$\displaystyle \sum_{{h, h' | hh'=g}}(h,h', \delta(g'))$};
\node [fill opacity=1]  at (5, -1.7){$\displaystyle \sum_{h,h' | hh'=g}(h,\delta(h'), \delta(g'))$};
\node [fill opacity=1] at (5.7, 0){$(g(g')^{-1}, \delta(g'))$};
\node [fill opacity=1] at (5.15, 1.7){$(g(g')^{-1}, \delta(g'),  \delta(g'))$};
\node [fill opacity=1] at (5.4, 3.3){$(g(g')^{-1}, g',  \delta(g'))$};
\node [fill opacity=1] at (6.3, 4.8){$(g,  \delta(g'))$};
\node [scale=0.8, fill opacity=1]at (-1.2, -5.2){$g$};
\node [scale=0.8, fill opacity=1]at (-1.2, 5.2){$g$};
\node [scale=0.8, fill opacity=1]at (2.4, -5.25){$\delta(g')$};
\node [scale=0.8, fill opacity=1]at (-2.2, -3.5){$h$};
\node [scale=0.8, fill opacity=1]at (-0.2, -3.5){$h'$};
\node [scale=0.8, fill opacity=1]at (-0.2, 3.5){$g'$};
\node [scale=0.8, fill opacity=1] at (-2.6, 3.5){$g(g')^{-1}$};
\node [scale=0.8, fill opacity=1] at (-2.6, 0){$g(g')^{-1}$};
\node [scale=0.8, fill opacity=1] at (2.4, 5.2){$\delta(g')$};
\node [scale=0.8, fill opacity=1] at (2.4, 0.4){$\delta(g')$};
\node [scale=0.8, fill opacity=1] at (2.4, -1.2){$\delta(g')$};
\node [scale=0.8, fill opacity=1] at (0.3, -1.2){$\delta(h')$};
\node [scale=0.8, fill opacity=1] at (-0.2, 0.4){$\delta(g')$};
\end{tikzpicture}
\end{aligned}
\end{align*}
The crucial step is when we extract the element of the sum $\sum_{hh'=g}(h,\delta(h'), \delta(g'))$ for which $\delta(h')=\delta(g')$. As $\delta$ is a bijection, this implies $h'=g'$. Hence $h=g(h')^{-1}=g(g')^{-1}$, and the whole procedure equals the identity as required.
\end{proof}

\subsection{Communication structures in \cat{GpdAct}}
\label{sec:concreteencryption}

\noindent
By applying Theorem~\ref{thm:commfromcomp} to the complementary structures described by Lemma~\ref{lem:compingpdact}, we obtain we obtain a communication structure in $\GA$ associated to any group.

It is well-established that communication structures give syntactic descriptions of encrypted communication procedures, for which $\lambda$ is the encryption operation, whose first input is the plaintext and whose second input is the secret key. To see how this operates in our case, we consider the action of the encryption procedure $\lambda$ on a pair of elements $(g,\delta(g'))$:
\begin{equation}
\begin{aligned}
\begin{tikzpicture} [scale=1, fill opacity=0.5]
\draw [fill=\fillA, draw=none, fill opacity=0.5]
(-0.9, -4.5)
to [out=up, in=left](0,-3.6)
to (0, -3.4)
to [out=left, in=down](-0.9, -2.5)
to (-1.1, -2.5)
to (-1.1, -4.5);
\draw (-0.9, -4.5) [thick]
to [out=up, in=left](0,-3.6)
to (0, -3.4)
to [out=left, in=down](-0.9, -2.5)
to (-1.1, -2.5)
to (-1.1, -4.5);
\draw [fill=\fillA, draw=none]
(1.1, -1.0)
to (1.1, -2.5)
to (0.9, -2.5)
to [out=up, in=down](-0.7, -1.0);
\draw (1.1, -1)  [thick]
to (1.1, -2.5)
to (0.9, -2.5)
to [out=up, in=down](-0.7, -1);
\draw [fill=\fillB, draw=none]
(-1.1, -1)
to (-1.1, -2.5)
to (-0.9, -2.5)
to [out=up, in=down](0.7, -1);
\draw (-1.1, -1)  [thick]
to (-1.1, -2.5)
to (-0.9, -2.5)
to [out=up, in=down](0.7, -1);
\draw [fill=\fillB, draw=none]
(0.9, -4.5)
to [out=up, in=right](0,-3.6)
to (0, -3.4)
to [out=right, in=down](0.9, -2.5)
to (1.1, -2.5)
to (1.1, -4.5);
\draw (0.9, -4.5)  [thick]
to [out=up, in=right](0,-3.6)
to (0, -3.4)
to [out=right, in=down](0.9, -2.5)
to (1.1, -2.5)
to (1.1, -4.5);
\draw [dotted, ->] (-1, -4.6) to (-1, -3);
\draw [dotted, ->] (-1, -3) to (-1, -2.5) to [out=up, in=down](-0.4, -1.0);
\draw [dotted, ->] (1, -4.6) to (1, -3);
\draw [dotted, ->] (1, -3) to (1, -2.5) to [out=up, in=down](0.4, -1);
\draw [dotted] (1, -3.9) to [out=up, in=right](0, -3.5) to [out=left, in=down](-1, -3.1);
\node [circlelabel, fill opacity=1, rotate=-80] at (0, -3.5) {$\delta$};
\node [circlelabel, fill opacity=1, rotate=10] at (-1, -2.5) {$\delta$};
\node [circlelabel, fill opacity=1, yscale=-1, rotate=10] at (1, -2.5) {$\delta$};
\node [anchor=east, fill opacity=1] at (4.4, -3) {$(gg', \delta(g' {}^{-1}))$};
\node [anchor=east, fill opacity=1] at (4.4, -1.7) {$(\delta(gg'), g')$};
\node [anchor=east, fill opacity=1] at (4.4, -4) {$(g,\delta(g'))$};
\node [scale=0.8, fill opacity=1] at (0.8, -3.7) {$\delta( g')$};
\node [scale=0.8, fill opacity=1] at (0.8, -1.7) {$g'$};
\node [scale=0.8, fill opacity=1] at (-0.9, -4) {$g$};
\node [scale=0.8, fill opacity=1] at (-0.85, -3.3) {$gg'$};
\node [scale=0.8, fill opacity=1] at (-0.7, -1.7) {$\delta(gg')$};
\draw [dashed, gray] (-1.5, -4.5) to (4.6, -4.5);
\draw [dashed, gray] (-1.5, -3.5) to (4.6, -3.5);
\draw [dashed, gray] (-1.5, -2.5) to (4.6, -2.5);
\draw [dashed, gray] (-1.5, -1) to (4.6, -1);
\end{tikzpicture}
\end{aligned}
\end{equation}
The partial transpose of $\delta$ has the same effect on $(g, \delta(g'))$ as in \ref{lem:compingpdact}. The output $(\delta(gg'), g')$ is a morphism in the product $\cat {|G|} \times \cat G$, whose factors correspond to the red and blue regions respectively at the top of the diagram.

The information about the logical state $\delta(gg')$ is entirely contained in the first factor. Since $g'$ represents the secret key, which could take any value, the value $\delta(gg')$ reveals no information about the plaintext $g$ to an adversary who is ignorant of $g'$. For this reason, $\delta(gg')$ is indeed an appropriate ciphertext for the encryption process.

The second factor of the output of the encryption process is the morphism $g'$ in \cat G. Since this groupoid has only one object, this is pure microstate information, and we can interpret it as the heat output of the process. The value of this microstate is precisely the secret key; by the analysis of Section~\ref{sec:phenomenology}, this will be unavailable to any adversary who cannot control the entire system $\BG \circ \GB$, preserving the security of the protocol in most circumstances.

\section{Quantization}
\label{sec:quantum}

\subsection{Introduction}

\noindent
In physics, it is currently accepted that quantum theory is the fundamental theory of physical processes, at least when gravitation is excluded. For this reason, if we have a categorical semantics that purports to describe the physical context of classical computation, it is interesting to ask whether it has an interpretation in a quantum semantics. We call such an interpretation \emph{quantization}, after the standard process in quantum theory where a classical description of a system is converted into a quantum description.

We show in Section~\ref{sec:definitionQ} that such a quantization procedure can indeed be defined, in the form of a pseudofunctor $$\mbox{$Q : \GA \to \cat{2Hilb}$},$$ where \cat{2Hilb} is a bicategory that is already known to be suitable for the description of quantum computational processes~\cite{b97-hda2, v12-hsqp}.
Applying $Q$ to the model of encrypted communication given in Section~\ref{sec:encryption} yields exactly quantum teleportation, a result we prove in Section~\ref{sec:teleportation}. This is the first demonstration in the literature that that quantum teleportation has a functorial classical analogue, an attractive result which we discuss in greater length in Section~\ref{sec:otherwork}.

Having this explicit mapping from encrypted communication to quantum teleportation in hand, we can examine how certain thermodynamic and quantum phenomena are related. We show in Section~\ref{sec:decoherence} that the thermodynamic model exhibits a classical version of decoherence, which under $Q$ given back quantum decoherence. In Section~\ref{sec:densecoding} we show that quantum dense coding also has a classical thermodynamic representation, which is its preimage under $Q$.

\subsection{Definition of $Q$}
\label{sec:definitionQ}

\begin{defn}
The quantization pseudofunctor$$Q : \GA \to \cat{2Hilb}$$ is defined as follows:
\begin{itemize}
\item On objects, for a groupoid \cat G, we have $Q(\cat G) := \Rep(\cat G)$
\item On morphisms, for a functor $S: \cat H ^\op \times \cat G \to \cat{Set}$, we have $Q(S) := \Lin \circ S$, where $\Lin:\cat{Set} \to \cat{Hilb}$ is the linearization functor
\item On 2-morphisms, for a natural transformation \mbox{$\alpha: S \times T \to \N$}, we define \mbox{$Q(\alpha):Q(S) \Rightarrow Q(T)$} as the natural transformation given at stage $(H,G)$ by  $Q(\alpha) _{(H,G)}(s) := \sum_t \alpha(s,t) _{(H,G)} \, t$
\end{itemize}
\end{defn}

\noindent
For a morphism $S:\cat G \to \cat H$ the prescription above yields a representation of $\cat H ^\op \times \cat G$; by the compact structure of \cat{2Hilb}~\cite{b97-hda2} this is canonically equivalent to a linear functor $\Rep(\cat G) \to \Rep(\cat H)$.

Several lemmas are required to establish that this definition is valid.

\begin{lemma}
For a 2\-morphism $\alpha: S \times T \to \N$ in \GA, the family of maps $Q(\alpha)$ satisfies the naturality condition.
\end{lemma}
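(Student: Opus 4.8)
The plan is to unwind the definitions until the statement becomes a direct calculation, and then to feed in the naturality of $\alpha$ together with the fact that $\cat G$ and $\cat H$ are groupoids. Recall that $Q(S) = \Lin \circ S$ and $Q(T) = \Lin \circ T$ are functors $\cat H ^\op \times \cat G \to \cat{Hilb}$, and that $Q(\alpha)$ is the family of linear maps whose component at $(H,G)$ is determined on the basis vector $s \in S(H,G)$ by $Q(\alpha)_{(H,G)}(s) = \sum_{t \in T(H,G)} \alpha(s,t)_{(H,G)}\, t$. So for each morphism $(h,g) : (H,G) \to (H',G')$ of $\cat H ^\op \times \cat G$---that is, $h : H' \to H$ in $\cat H$ and $g : G \to G'$ in $\cat G$---I must check that the square with sides $Q(S)(h,g)$, $Q(\alpha)_{(H,G)}$, $Q(\alpha)_{(H',G')}$ and $Q(T)(h,g)$ commutes. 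Since $\Lin$ sends a function to its linear extension, both composites around the square are linear maps, so it suffices to verify the equality on each basis vector $s \in S(H,G)$.

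First I would chase $s$ around the square both ways. Write $S'$ and $T'$ for the functions $S(h,g)$ and $T(h,g)$. Going right and then down yields $\sum_{t \in T(H,G)} \alpha(s,t)_{(H,G)}\, T'(t)$, since $Q(T)(h,g) = \Lin(T')$ acts on basis vectors by applying $T'$. Going down and then right yields $Q(\alpha)_{(H',G')}\big( S'(s) \big) = \sum_{t' \in T(H',G')} \alpha\big( S'(s), t' \big)_{(H',G')}\, t'$. It thus remains to identify these two formal sums.

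Here the groupoid hypothesis enters: $h$ and $g$ are invertible, hence $(h,g)$ is invertible in $\cat H ^\op \times \cat G$, so $S'$ and $T'$ are bijections of sets, the latter a bijection $T(H,G) \to T(H',G')$. Re-indexing the down-then-right sum along this bijection, writing $t' = T'(t)$, turns it into $\sum_{t \in T(H,G)} \alpha\big( S'(s), T'(t) \big)_{(H',G')}\, T'(t)$. Now I invoke the \emph{naturality of $\alpha : S \times T \Rightarrow \N$}: because $\N$ is the constant functor and so sends $(h,g)$ to the identity on $\N$, its naturality square at $(h,g)$ reads $\alpha\big( S'(s), T'(t) \big)_{(H',G')} = \alpha(s,t)_{(H,G)}$ for all $s$ and $t$. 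Substituting this equality makes the re-indexed sum agree termwise with the right-then-down composite, which is exactly the commutativity we want.

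I do not anticipate a genuine obstacle; the only points needing care are the contravariance in the $\cat H$ factor (so that $(h,g)$ genuinely is a morphism $(H,G) \to (H',G')$ with $h : H' \to H$), the bookkeeping of the re-indexing bijection, and---for $Q(\alpha)_{(H,G)}$ to be a well-defined linear map at all---the finiteness of the defining sum, which is part of the standing hypotheses on $\GA$ and would be recorded alongside the other validity checks for $Q$. Everything else is a mechanical consequence of the functoriality of $\Lin$ and the naturality of $\alpha$.
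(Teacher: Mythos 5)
Your proof is correct and follows essentially the same route as the paper's: both reduce the claim to a termwise identity by re-indexing the sum along the bijection $T(h,g)$ (available because $\cat G$ and $\cat H$ are groupoids) and then apply the naturality of $\alpha$ against the constant functor $\N$ to get $\alpha(S(h,g)(s),T(h,g)(t)) = \alpha(s,t)$. The paper phrases the conclusion as an intertwiner identity $Q(\alpha)(h\cdot s\cdot g) = h\cdot Q(\alpha)(s)\cdot g$ rather than as a commuting square, but this is only a difference of presentation.
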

\begin{proof}
The naturality condition amounts to the requirement that the linear maps are intertwiners for the group actions. Suppose we have a morphism $(h,g):(H',G) \to (H,G')$ in $\cat H ^\op \times \cat G$. Then we test the intertwiner property as follows:
\begin{align*}
Q(\alpha)_{(H,G')}(h \cdot s \cdot g)
&= \textstyle \sum_{t \in T(H,G')} \alpha(h \cdot s \cdot g, t) \, t
\\
&= \textstyle \sum_{t \in T(H,G')} \alpha(s,t)\, t
\\
&= \textstyle \sum_{t \in T(H,G')} \alpha(s,h \cdot t \cdot g)\, t
\\
&= \textstyle \sum _{t' \in T (H',G)} \alpha(s,t) \, (h \cdot t' \cdot g)
\\
&= h \cdot \big(Q(\alpha) _{(H',G)}(s) \big) \cdot g
\end{align*}
This completes the proof.
\end{proof}

\begin{lemma}
 $Q$ preserves vertical composition.
\end{lemma}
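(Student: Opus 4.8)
The plan is to unwind both sides of the desired equation at a single stage $(H,G)$ and check that they collapse to the same double sum over the intermediate index set. Concretely, suppose we are given composable $2$\-morphisms $\sigma : S \Rightarrow T$ and $\tau : T \Rightarrow U$ in $\GA$, i.e.\ natural transformations $\sigma : S \times T \Rightarrow \N$ and $\tau : T \times U \Rightarrow \N$; write $\tau \bullet \sigma : S \Rightarrow U$ for their vertical composite and recall that $Q$ sends this to the natural transformation with components the stagewise composites of the $Q(\sigma)_{(H,G)}$ and $Q(\tau)_{(H,G)}$.

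First I would make explicit the span description of vertical composition. Since vertical composition in $\GA$ is ordinary composition of the underlying families of spans of sets, and composition of spans of (locally finite) sets is computed by multiplying their $\N$\-valued incidence matrices, we have for all $s \in S(H,G)$ and $u \in U(H,G)$
$$(\tau \bullet \sigma)(s,u)_{(H,G)} \;=\; \sum_{t \in T(H,G)} \sigma(s,t)_{(H,G)} \cdot \tau(t,u)_{(H,G)},$$
a finite sum by local finiteness of $\cat G$ and $\cat H$.

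Next I would use that $Q$ acts on $2$\-morphisms via a prescription defined on basis vectors and extended linearly, and that vertical composition in $\cat{2Hilb}$ is stagewise composition of the component linear maps; so it suffices to verify, for each object $(H,G)$ of $\cat H^\op \times \cat G$ and each basis vector $s \in S(H,G)$, the identity $Q(\tau)_{(H,G)}\bigl(Q(\sigma)_{(H,G)}(s)\bigr) = Q(\tau \bullet \sigma)_{(H,G)}(s)$. Expanding the left\-hand side with the definition of $Q$ twice, and pulling the scalars $\sigma(s,t)_{(H,G)}$ through the linear map $Q(\tau)_{(H,G)}$, gives $\sum_{u}\bigl(\sum_{t}\sigma(s,t)_{(H,G)}\,\tau(t,u)_{(H,G)}\bigr)\,u$, whereas the right\-hand side is by definition $\sum_{u}(\tau \bullet \sigma)(s,u)_{(H,G)}\,u$; these agree by the span\-composition formula above.

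I do not anticipate a genuine obstacle: the lemma is essentially the statement that the linearization functor $\Lin$ turns composition of $\N$\-valued spans into matrix multiplication. The only points that need care are bookkeeping — tracking which set each summation index ranges over, and over which stage $(H,G)$ one is working — and, if one wants to be fully rigorous about the target bicategory, confirming that the sums appearing are finite, which again follows from local finiteness of the groupoids and hence of the relevant hom\-sets.
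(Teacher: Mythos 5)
The paper states this lemma without proof, so there is nothing to compare against; your argument is the natural one and it is correct. Vertical composition in $\GA$ is composition of $\N$-valued spans, hence matrix multiplication $(\tau\bullet\sigma)(s,u)=\sum_t\sigma(s,t)\,\tau(t,u)$ stagewise, and the definition $Q(\alpha)_{(H,G)}(s)=\sum_t\alpha(s,t)_{(H,G)}\,t$ turns this directly into composition of the component linear maps, exactly as you compute. One small bookkeeping correction: the finiteness of the inner sum is over the set $T(H,G)$, i.e.\ a value of the profunctor $T$, not a hom-set of the groupoids, so it rests on the (implicit) finiteness of the sets in the image of the profunctors rather than on local finiteness of $\cat G$ and $\cat H$ per se; the paper is equally silent on this point, so it does not affect the validity of your argument.
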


\begin{lemma}
For morphisms $S: \cat G \to \cat H$ and $T : \cat H \to \cat J$ in \GA, we have
$$Q(T \circ S) \simeq Q(T) \circ Q(S).$$
\end{lemma}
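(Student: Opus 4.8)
The plan is to unwind the composite $Q(T \circ S)$ using the coend formula for profunctor composition and compare it termwise with the composite $Q(T) \circ Q(S)$ computed via the compact structure of \cat{2Hilb}. First I would recall that for profunctors $S : \cat G \to \cat H$ and $T : \cat H \to \cat J$, the composite is the coend $(T \circ S)(J,G) = \int^{H} T(J,H) \times S(H,G)$, which for locally finite groupoids reduces to the quotient of $\coprod_{H} T(J,H) \times S(H,G)$ by the relation identifying $(t \cdot h, s)$ with $(t, h \cdot s)$ for each morphism $h$. Applying the linearization functor $\Lin$ to this set gives a Hilbert space; the point is to identify it with the corresponding coend computed in \cat{Hilb}, i.e.\ to show that $\Lin$ sends the set-level coend to the $\cat{Hilb}$-level coend. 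Since $\Lin$ is a left adjoint (to the forgetful functor $\cat{Hilb} \to \cat{Set}$ on a suitable restriction, or more precisely the free-vector-space functor on finite sets), it preserves colimits, and in particular it preserves coequalizers and coproducts; hence it commutes with the coend construction. This gives $Q(T \circ S)(J,G) \simeq \int^{H} Q(T)(J,H) \otimes Q(S)(H,G)$ as objects of \cat{Hilb}, naturally in $(J,G)$.

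Next I would match this against the definition of composition in \cat{2Hilb}. A morphism $Q(S) : \Rep(\cat G) \to \Rep(\cat H)$ in \cat{2Hilb} is, after using the compact structure, the same data as a representation of $\cat H^{\op} \times \cat G$ on Hilbert spaces, and composition of such 1-morphisms is precisely given by the same coend/tensor formula $\int^{H} Q(T)(J,H) \otimes Q(S)(H,G)$. So the comparison is really a matter of checking that the canonical equivalence used to pass between ``profunctor to \cat{Set}'' and ``linear functor between representation categories'' is compatible with composition on both sides. This I would verify by choosing the explicit equivalence (send a representation $R$ of $\cat H^{\op}\times\cat G$ to the functor $X \mapsto \int^{G} R(-,G)\otimes X(G)$), and observing that Fubini for coends gives the associativity isomorphism that witnesses $Q(T\circ S)\simeq Q(T)\circ Q(S)$.

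Finally I would verify that the comparison isomorphism is natural in both arguments, so that it assembles into a genuine component of the pseudofunctor structure (one of the invertible 2-cells $Q_{T,S}$), and check the relevant coherence — that these isomorphisms are compatible with the associator of \GA\ and of \cat{2Hilb} — though a full coherence verification would be deferred, consistent with the paper's stated policy on the symmetric monoidal bicategory structure. The bookkeeping needed for the 2-morphism level (checking that $Q$ applied to a horizontal composite of 2-cells agrees, under this isomorphism, with the horizontal composite of the images) would use the explicit formula $Q(\alpha)_{(H,G)}(s) = \sum_t \alpha(s,t)_{(H,G)}\, t$ together with the definition of horizontal composition of 2-cells in \GA\ given in Section~\ref{fulldefinition}; the sum over $H' \xto{f} H$ appearing there is exactly what becomes, after linearization, the coend/trace pairing in \cat{2Hilb}.

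The main obstacle is the bookkeeping around the compact-closed reinterpretation: the definition of $Q$ produces a representation of $\cat H^{\op}\times\cat G$, and one must be careful that ``transpose across the cap/cup of \cat{2Hilb}'' is done consistently on both sides of the claimed equation, since a mismatch there would introduce a spurious dual or conjugate and break naturality. Concretely, the hard part will be checking that the coend over the intermediate groupoid $\cat H$ that computes profunctor composition matches — with the correct variances — the contraction of the middle tensor factor that computes 1-morphism composition in \cat{2Hilb}, including verifying that the finiteness hypotheses on the groupoids make the relevant (co)limits finite so that $\Lin$ genuinely preserves them and no completion issues arise.
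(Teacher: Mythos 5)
There is a genuine gap, and it sits exactly where you defer the work to ``bookkeeping.'' The first half of your argument is fine: $\Lin$ is a free functor, preserves coproducts and coequalizers, and therefore sends the set-level coend $\int^{H} T(J,H)\times S(H,G)$ to the corresponding coend in $\cat{Hilb}$. But your next assertion --- that composition of 1\-morphisms in $\cat{2Hilb}$ ``is precisely given by the same coend/tensor formula'' --- is not how the paper (or the compact structure of $\cat{2Hilb}$) computes the composite. The paper writes $(Q(T)\circ Q(S))'$ as $(\id\boxtimes \Hom_{Q(\cat H)}(-,-)\boxtimes\id)\circ(Q(T)'\boxtimes Q(S)')$: the middle factor is contracted by the \emph{Hom} of $\Rep(\cat H)$, i.e.\ by a space of intertwiners, which is an end, not a coend. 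So the lemma is really the statement that the free vector space on the orbit set $\{(t\cdot h,s)\}/\!\sim$ is canonically isomorphic to the intertwiner space $\Hom_{\Rep\cat H}\bigl(\Orbit_{\cat H}(t),\Orbit_{\cat H}(s)\bigr)$, and that isomorphism is not an instance of ``$\Lin$ preserves colimits'': it is the coinvariants-versus-invariants comparison for a finite groupoid in characteristic zero, and it is implemented by an averaging map. Concretely the paper sends a class $(t\cdot h,s)$ to the intertwiner
\begin{equation*}
\sigma(t\cdot h,s)(t\cdot h') \;=\; |\Stab_{\cat H}(t)|^{-1}\textstyle\sum_{h_0\in\Stab_{\cat H}(t)} h'h_0h\cdot s,
\end{equation*}
and must check this is well defined on equivalence classes, lands in the intertwiners, and is inverse to evaluation $L\mapsto(t,L(t))$. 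The finiteness hypothesis and the division by $|\Stab_{\cat H}(t)|$ are essential here; this is where the lemma could fail in other settings, not in the preservation of (co)limits by $\Lin$.

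Your closing paragraph does flag a mismatch between ``coend over $\cat H$'' and ``contraction of the middle factor,'' but frames it as a variance/duality bookkeeping issue to be checked, rather than supplying the idea (averaging over stabilizers, i.e.\ the semisimplicity of $\Rep(\cat H)$) that resolves it. Without that construction the proposal does not prove the statement: it proves only that $Q(T\circ S)$ is the Hilbert-space coend, and leaves unestablished that this coend agrees with the end that $\cat{2Hilb}$-composition actually produces. I would suggest either (a) inserting the explicit $\sigma$/$\pi$ averaging isomorphism as the paper does, or (b) if you want to keep the slicker colimit-preservation argument, first proving a separate lemma that for locally finite groupoids over $\C$ the canonical norm map from the coend to the end of $\Hom_{Q(\cat H)}$ is invertible, and citing that. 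Either way, that isomorphism is the mathematical content of the lemma and cannot be absorbed into naturality/coherence checks.
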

\begin{proof}
Our goal is to show that the linearized profunctor composition operation gives the same effect as composition in~\cat{2Hilb}. Since \cat{2Hilb} is compact, the morphisms $Q(S)$ and $Q(T)$ can be curried to give
\begin{align*}
Q(S)' &: \cat{Hilb} \to Q(\cat H)^\op \boxtimes Q(\cat G)
\\
Q(T)' &: \cat{Hilb} \to Q(\cat J) ^\op \boxtimes Q(\cat H),
\end{align*}
and the composed morphism $(Q(T) \circ Q(S))'$ can be given in this perspective as
$$(\id _{Q(\cat J) ^\op} \boxtimes \Hom _{Q(\cat H)}(-,-) \boxtimes \id _{Q(\cat G)}) \circ (Q(T)' \boxtimes Q(S)'),$$
where for brevity we omit the weak structure of the tensor product.

Fix objects $G \in \cat G$, $H \in \cat H$ and $J \in \cat J$, and choose $t \in T(J,H)$ and $s \in S(H,G)$. Then $(t,s)$ is an element of the profunctor composite $T \circ S$ at stage $(J,G)$. For any morphism $h:H \to H'$ in \cat H, the pair $(t,s)$ is equivalent to any element of the form $(T(J,h)(t),S(h ^{-1}, G)(s))$, which we write using compressed notation as $(t \cdot  h, h^{-1} \cdot s)$, suppressing the objects $J$ and $G$ as they are presumed fixed. Since we are working with groupoids, this gives the entire equivalence relation of profunctor composition.

It follows that we must demonstrate, for all $s$ and $t$, a particular isomorphism of vector spaces. The first vector space is the free complex vector space of equivalence classes of pairs of elements in the orbits of $t$ and $s$, i.e. pairs of the form $(t \cdot h, h' \cdot s)$, which are all equivalent to pairs of the form $(t \cdot h, s)$. The second vector space is the space of intertwiners between the actions of $\cat H$ on the free complex vector space on the orbits of $t$ and $s$; such an intertwiner is a linear map \mbox{$L: \Orbit _{\cat H} (t) \to \Orbit_{\cat H} (s)$}, where we allow these sets to stand for their free complex vector spaces over them, satisfying the condition $L(t \cdot h) = h \cdot L(t)$ for all morphisms $h \in \cat H$.

Given an intertwiner $L$, we map it to the following linear combination $\pi(L)$ of pairs of elements:
\begin{align*}
\pi(L) &:= (t,L(t))
\intertext{Conversely, given a pair of elements $(t \cdot h, s)$, we map it to the intertwiner $\sigma(t \cdot h, s)$ defined as follows:}
\sigma (t \cdot h, s)(t \cdot h') 
&:=
{|\Stab_{\cat H} (t)|} ^{-1}
\textstyle\sum _{h_0 \in \Stab _{\cat H} (t)} h' h_0 h \cdot s
\end{align*}
We must show that  $\sigma(t \cdot h, s)$ is well-defined. Suppose \mbox{$h' \in \Stab _{\cat H} (t)$}. Then we can shift the summation variable \mbox{$h_0 \leadsto h' h_0$}, and we see that $\sigma (t \cdot h, s)$ takes the same value at $t$ and $t \cdot h'$ as required.

We must also show that $\sigma$ itself is well-defined. Suppose $(t \cdot h_1, s) \sim (t \cdot h_2, s)$; then $\sigma$ must give the same intertwiner on each of these. For these pairs to be equivalent is the same condition as saying $(t,h_1 \cdot s) \sim (t, h_2 \cdot s)$. This means there exists some $h''$ with $t \cdot h'' = t$, and $h'' {}^{-1} h_1 \cdot s = h_2 \cdot s$. The first of these equations simply says $h'' \in \Stab _{\cat H} (t)$. We then apply our definition of $\sigma$ and perform the following computation:
\begin{align*}
\sigma (t \cdot h_2, s)(t \cdot h')
\hspace{-2cm}
\\
&=
\textstyle {|\Stab_{\cat H} (t)|} ^{-1}
\sum _{h_0 \in \Stab _{\cat H} (t)} h' h_0 h_2 \cdot s
\\
&=
\textstyle {|\Stab_{\cat H} (t)|} ^{-1}
\sum _{h_0 \in \Stab _{\cat H} (t)} h' h_0 h'' {}^{-1} h_1 \cdot s
\\
&=
\textstyle {|\Stab_{\cat H} (t)|} ^{-1}
\sum _{h' _0 \in \Stab _{\cat H} (t)} h' h'_0 h_1 \cdot s
\\
&= \sigma (t \cdot h_1, s)(t \cdot h')
\end{align*}
At the second equality we use the identity obtained immediately above, and at the third equality we shift $h_0$ by $h''$. So $\sigma$ is well-defined as required.

Finally we demonstrate that the constructions $\sigma$ and $\pi$ are inverse. Given an arbitrary intertwiner $L$, we first perform the following calculation:
\begin{align*}
\sigma (\pi (L))(t \cdot h)
&=
{|\Stab_{\cat H} (t)|} ^{-1}
\textstyle\sum _{h_0 \in \Stab _{\cat H} (t)} h h_0 \cdot L(t)
\\
&=
{|\Stab_{\cat H} (t)|} ^{-1}
\textstyle\sum _{h_0 \in \Stab _{\cat H} (t)}  L(t \cdot h_0 h)
\\
&=
{|\Stab_{\cat H} (t)|} ^{-1}
\textstyle\sum _{h_0 \in \Stab _{\cat H} (t)} L(t \cdot h)
\\
&= L(t \cdot h)
\end{align*}
For the other direction we apply our constructions to an arbitrary pair $(t \cdot h, s)$:
\begin{align*}
\pi(\sigma(t \cdot h, s))
&= \left( t, {|\Stab_{\cat H} (t)|} ^{-1}
\textstyle\sum _{h_0 \in \Stab _{\cat H} (t)} h_0 h \cdot s \right) 
\\
&= | \Stab_{\cat{H}}(t) | ^{-1} \textstyle\sum _{h_0 \in \Stab _{\cat H} (t)} ( t, h_0 h \cdot s)
\\
&\sim | \Stab_{\cat{H}}(t) | ^{-1} \textstyle\sum _{h_0 \in \Stab _{\cat H} (t)} ( t \cdot h_0 h, s)
\\
&= | \Stab_{\cat{H}}(t) | ^{-1} \textstyle\sum _{h_0 \in \Stab _{\cat H} (t)} ( t \cdot h, s)
\\
&= (t \cdot h, s)
\end{align*}
This completes the proof.
\end{proof}

It is beyond the scope of this paper to demonstrate that $Q$ preserves all the structure that is present, such as the symmetric monoidal structure. We leave this as a conjecture.
\begin{conjecture} The construction $Q$ can be promoted into a symmetric monoidal pseudofunctor.
\end{conjecture}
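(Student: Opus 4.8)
The plan is to exhibit the comparison data that upgrades $Q$ to a symmetric monoidal pseudofunctor in the sense of~\cite{sp11-thesis} and then to verify the coherence axioms. Recall that this structure consists of a compositor --- a pseudonatural equivalence comparing $Q$ applied to a tensor product with the tensor product of the values of $Q$ --- a unit comparison, and invertible modifications governing associativity and the units, all subject to coherence, together with a compatibility between the compositor and the braidings. Since the lemmas above already give $Q$ the structure of a pseudofunctor, all the remaining work lies in producing and verifying this monoidal and symmetric data.

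First I would construct the compositor. On objects the tensor product of $\GA$ is the cartesian product of groupoids, and that of $\cat{2Hilb}$ is the Deligne tensor product $\boxtimes$, with $\cat{Hilb} \boxtimes \cat{Hilb} \simeq \cat{Hilb}$. External product of representations gives a $\C$-bilinear functor $\Rep(\cat G) \times \Rep(\cat H) \to \Rep(\cat G \times \cat H)$, $(V,W) \mapsto V \boxtimes W$, which by the universal property of $\boxtimes$ factors through a functor $\chi_{\cat G, \cat H} : Q(\cat G) \boxtimes Q(\cat H) \to Q(\cat G \times \cat H)$. That $\chi_{\cat G,\cat H}$ is an equivalence reduces, after decomposing each groupoid into its connected components, each equivalent to a group viewed as a one-object groupoid, to the classical fact that over $\C$ the irreducible representations of $G \times H$ are precisely the external products of irreducibles of $G$ and of $H$. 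The only care needed here is that $\cat G$ is merely locally finite, so $\Rep(\cat G)$ may be an infinite product of 2-Hilbert spaces; one must then work with a version of $\cat{2Hilb}$ closed under such products and check that $\boxtimes$ distributes over them. For the unit I would take the canonical equivalence $\iota : \cat{Hilb} \to \Rep(\cat 1) = Q(\cat 1)$.

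Next I would promote the $\chi_{\cat G, \cat H}$ to a pseudonatural equivalence. The tensor product in $\GA$ of profunctors $S : \cat G \to \cat G'$ and $T : \cat H \to \cat H'$ is their external product, and linearization satisfies $\Lin(A \times B) \cong \Lin(A) \otimes \Lin(B)$ naturally in the sets $A, B$; hence, read as a representation of a product groupoid, $Q(S \boxtimes T)$ is canonically the external product of the representations $Q(S)$ and $Q(T)$. Transporting this isomorphism through the compact-closed currying by which $\cat{2Hilb}$ turns a representation of $\cat H^\op \times \cat G$ into a linear functor $\Rep(\cat G) \to \Rep(\cat H)$ yields the invertible 2-cell filling the naturality square for $\chi$; the accompanying modification condition on 2-morphisms of $\GA$ follows from the explicit formula $Q(\alpha)_{(H,G)}(s) = \sum_t \alpha(s,t)_{(H,G)}\, t$ together with the analogous factorization of 2-cells through the external product.

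Finally I would supply the associativity modification $\omega$ and the two unit modifications, and, for the symmetric enhancement, the comparison of $\chi_{\cat G,\cat H}$ and $\chi_{\cat H,\cat G}$ with the braidings of $\times$ and of $\boxtimes$. Because external product of representations is strictly associative and unital and is symmetric in the evident way, these modifications are assembled from the canonical coherence isomorphisms of the Deligne tensor product, and the list of equations defining a symmetric monoidal pseudofunctor should collapse to instances of coherence for $\boxtimes$ in $\cat{2Hilb}$ and for $\times$ in $\GA$. I expect the main obstacle to be twofold: (i) establishing pseudonaturality of $\chi$ \emph{at the level of linear functors} rather than merely of representations of product groupoids, i.e.\ controlling how the external product interacts with the compact-closed currying; and (ii) the sheer volume of coherence data, since a symmetric monoidal pseudofunctor carries many conditions, each of which must be traced back to the coherence of the source and the target.
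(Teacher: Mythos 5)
The statement you are addressing is not proved in the paper at all: it is explicitly left as a conjecture, with the authors remarking that even writing out the definition of a symmetric monoidal pseudofunctor between symmetric monoidal bicategories runs to several pages. So there is no proof of record to compare against, and the question is only whether your sketch actually closes the gap. It does not: what you have written is a correct inventory of the data that must be supplied (compositor $\chi_{\cat G,\cat H}$, unit comparison, associativity and unit modifications, braiding compatibility) together with a plausible candidate for each, but the two points you yourself flag as ``the main obstacle'' --- pseudonaturality of $\chi$ after transporting through the compact-closed currying, and the verification of the full list of coherence equations --- are precisely the content of the conjecture, and you defer rather than discharge them. A proposal that ends by naming the hard steps as expected obstacles is a research plan, not a proof.

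Two further concrete issues. First, your argument that $\chi_{\cat G,\cat H}$ is an equivalence relies on decomposing locally finite groupoids into one-object components and invoking that irreducibles of $G\times H$ are external products of irreducibles; this is fine for finite groups over $\C$, but you correctly note that $\Rep(\cat G)$ may then be an infinite product of 2-Hilbert spaces, and whether the relevant version of $\cat{2Hilb}$ is closed under such products and whether $\boxtimes$ distributes over them is left unresolved --- this is a genuine hypothesis on the target bicategory, not a formality. Second, the 2-morphisms of $\GA$ are not ordinary natural transformations but span-valued transformations $S\times T\Rightarrow \N$, with vertical composition given by composition of spans and horizontal composition by the explicit sum over $H'\xto{f}H$; your claimed ``modification condition on 2-morphisms'' for $\chi$ must be checked against these formulas, and the interaction of the external product of spans with the sum defining $Q(\alpha)$ is exactly where a discrepancy (e.g.\ a cardinality or stabilizer factor, of the kind that already appears in the paper's proof that $Q$ preserves horizontal composition) could appear. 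Until those computations are carried out, the statement remains, as in the paper, a conjecture.
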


\subsection{Encrypted communication and quantum teleportation}
\label{sec:teleportation}

\noindent
By the results of Section~\ref{sec:definitionQ}, the procedure $Q$ will translate the communication structures described in Section~\ref{sec:concreteencryption} from $\GA$ into \cat{2Hilb}. It was shown in~\cite{v12-hqt, v12-hsqp} that communication structures in~\cat{2Hilb} correspond exactly to implementations of quantum teleportation, so we can immediately conclude that $Q$ gives a functorial transformation of encrypted communication into quantum teleportation.

As discussed in Sections~\ref{sec:overview} and~\ref{sec:otherwork}, this goes significantly beyond existing results: for the first time, it demonstrates a `toy model' of quantum theory which  gives back the correct quantum procedures in a functorial way.

\subsection{Classical decoherence}
\label{sec:decoherence}

\noindent
Quantum decoherence is the process by which quantum systems become entangled with environmental degrees of freedom, often undesirably, causing entanglement between the parts of our quantum system to degenerate into classical correlation~\cite{decoherence}. This phenomenon is at the heart of the measurement problem~\cite{s05-decoherence}---one of the most profound unsolved problems in the foundations of physics---and so any insight into its mathematical structure is interesting to pursue.

A direct analogue of decoherence is present in our model. Suppose that we erase some logical information using a process of the form $\delta ^\dag$, as described in Section~\ref{sec:concretecomplementarity}. From a microscopic perspective this is completely reversible in principle, as long as the entire target system $\cat G$ is kept within our control; the logical information is now encoded in the microscopic state of \cat G. However, if the system \cat G interacts with its environment in any way, even reversibly, then by the results of Section~\ref{sec:phenomenology} this will in general perturb the microstate and render our original information irretrievable.

Applying our quantization process $Q$ to this thermodynamic scenario, and by applying the results of~\cite{v12-hqt, v12-hsqp},  we obtain exactly a description of decoherence and its effect in quantum theory. The sequence of events is as follows: we begin by measuring a quantum system. Just as in the thermodynamic case, from a microscopic perspective, this is completely reversible in principle, since the measurement must be enacted by a unitary operation. However, if the target system interacts with the environment in any way, then the quantum system will decohere into a classical mixture, and our original quantum state will be lost.

\subsection{Classical dense coding}
\label{sec:densecoding}

\noindent
A communication structure in a bicategory provides a solution to the following equation~\cite{v12-hqt, v12-hsqp}:
\begin{equation*}
\begin{aligned}
\begin{tikzpicture} [thick, yscale=0.7, fill opacity=0.5]
\draw [fill=\fillB, fill opacity=0.5, draw=none]  (0.78,-1)
to [out=up, in=\seangle](0.1,-0.1)
to (0, -0.3)
to [out=\seangle, in=up](0.58, -1)
to  [out=down, in=down, looseness=2](2.18,-1)
to [out=up, in=\seangle](1.5,1.1)
to (1.4, 0.9)
to [out=\seangle, in=up] (1.98, -1)
to [out=down, in=down, looseness=2](0.78, -1);
\draw   (0.78,-1)
to [out=up, in=\seangle](0.1,-0.1)
to (0, -0.3)
to [out=\seangle, in=up](0.58, -1)
to  [out=down, in=down, looseness=2](2.18,-1)
to [out=up, in=\seangle](1.5,1.1)
to (1.4, 0.9)
to [out=\seangle, in=up] (1.98, -1)
to [out=down, in=down, looseness=2](0.78, -1);
\draw [fill=\fillD, draw=none] (-1.5,-2.2)
to (-1.5,2.2)
to (-0.58,2.2)
to (-0.58,1.8)
to [out=down, in=\nwangle](0,-0.1)
to (0, -0.3)
to [out=\swangle, in=up](-0.58, -2.2);
\draw (-0.58,2.2)
to (-0.58,1.8)
to [out=down, in=\nwangle](0,-0.1)
to (0, -0.3)
to [out=\swangle, in=up](-0.58, -2.2);
\draw [fill=\fillD, draw=none] (0.7, 2.2) 
to [out=down, in=\nwangle](1.4,1)
to [out=\neangle, in=down](2.1,2.2);
\draw (0.7, 2.2) 
to [out=down, in=\nwangle](1.4,1)
to [out=\neangle, in=down](2.1,2.2);
\draw [fill=\fillA, draw=none](0,-0.1)
to [out=\neangle, in=down] (0.65, 0.5)
to [out=up, in=\swangle](1.3, 1.1)
to (1.4, 0.9)
to [out=\swangle, in=up](0.8, 0.35)
to [out=down, in=\neangle](0.1, -0.3);
\draw (0,-0.1)
to [out=\neangle, in=down] (0.65, 0.5)
to [out=up, in=\swangle](1.3, 1.1)
to (1.4, 0.9)
to [out=\swangle, in=up](0.8, 0.35)
to [out=down, in=\neangle](0.1, -0.3);
\node [draw, fill=white, fill opacity=1, shape=circle, fill opacity=1, scale=0.9]
 at (0, -0.2) {$\lambda'$};
\node [draw, fill=white, fill opacity=1, shape=circle, fill opacity=1]
 at (1.4, 1) {$\lambda$};
\node [fill opacity=1]
 at (0.45, 0.7) {$S$};
\end{tikzpicture}
\end{aligned}
\quad=\quad
\begin{aligned}
\begin{tikzpicture} [thick, yscale=0.7, fill opacity=0.5]
\draw [fill=\fillD, draw=none] (-1.5,-2.2)
to (-1.5,2.2)
to (-0.58,2.2)
to (-0.58,1.0)
to [out=down, in=down, looseness=2](0.58,1.0)
to (0.58,2.2)
to (1.8, 2.2)
to (1.8,1)
to [out=down, in=up] (0,-2.2);
\draw (-0.58,2.2)
to (-0.58,1.0)
to [out=down, in=down, looseness=2](0.58,1.0)
to (0.58,2.2);
\draw (1.8, 2.2)
to (1.8,1)
to [out=down, in=up] (0,-2.2);
\end{tikzpicture}
\end{aligned}
\end{equation*}
The left-hand side represents a sequence of processes whose overall result is splitting the system into two, as described by the right-hand side. An easy assumption is that the channel marked $S$ must therefore have sufficient capacity to describe the state of one copy of the system. Yet in fact $\lambda$ here is invertible, and $S$ has only the square root of the apparent necessary capacity.

A realization of this abstract procedure over a quantum semantics gives quantum dense coding~\cite{v12-hqt}, which in a sense allows 2 classical bits to be transferred along a quantum channel whose classical capacity is only 1 bit. This is recognized as one the most remarkable quantum informatic procedures~\cite{bw92}. Yet it follows from our results that it has a classical thermodynamic analogue, and that taking the image of this under $Q$ produces the well-known quantum version.

The thermodynamic analogue is given by the 2\-morphism $\lambda$ constructed in Section~\ref{sec:concreteencryption}, and its partial transpose $\lambda'$. By tracing through the action of $\lambda'$, it can be seen that the procedure succeeds thanks to perturbation of the microstate of the left-hand extended system by the free system $S$. This seems remarkable given that no such action is obviously present in the standard quantum solution, even though it is obtained from the thermodynamic solution functorially.

\ifbool{arxiv}{}{\fontsize{9.7}{11.64}\selectfont}
\newlength\mybibindent
\setlength\mybibindent{-17pt}
\makeatletter
\renewenvironment{thebibliography}[1]
     {\section*{\bibname}%
      \@mkboth{\MakeUppercase\bibname}{\MakeUppercase\bibname}%
      \list{\@biblabel{\@arabic\c@enumiv}}%
           {\settowidth\labelwidth{\@biblabel{1}}
            \leftmargin\labelwidth
            \advance\leftmargin\dimexpr\labelsep+\mybibindent\relax\itemindent-\mybibindent
            \@openbib@code
            \usecounter{enumiv}%
            \let\p@enumiv\@empty
            \renewcommand\theenumiv{\@arabic\c@enumiv}}%
      \sloppy
      \clubpenalty4000
      \@clubpenalty \clubpenalty
      \widowpenalty4000%
      \sfcode`\.\@m}
     {\def\@noitemerr
       {\@latex@warning{Empty `thebibliography' environment}}%
      \endlist}
\makeatother

\bibliographystyle{plain}
\bibliography{references}

\end{document}